\newtheorem{theorem}{Theorem}[section]
\newtheorem{proposition}{Proposition}[section]
\newtheorem{lemma}{Lemma}[section]
\newtheorem{remark}{Remark}[section]
\newtheorem{definition}{Definition}[section]
\newtheorem{hyp}{Hypothesis}[section]
\newcommand{\R}{\mathbb{R}}
\newcommand{\RR}{\mathcal{R}}
\newcommand{\C}{\mathbb{C}}
\newcommand{\T}{\mathcal{T}}
\newcommand{\n}{\noindent}
\newcommand{\m}{\medskip}
\newcommand{\D}{\displaystyle}
\newcommand{\diag}{\mathop{\mathrm{diag}}}
\newcommand{\NGO}{\mathcal{N}}
\newcommand{\bNGO}{\bar{\mathcal{N}}}
\newcommand{\bX}{\bar{X}}
\newcommand{\bY}{\bar{Y}}
\newcommand{\bZ}{\bar{Z}}
\newcommand{\one}{\mathds{1}}
\newcommand{\hp}{\mathop{\circ}}
\newcommand{\bA}{\bar{A}}
\newcommand{\bB}{\bar{B}}
\newcommand{\DD}{\mathcal{D}}
\newcommand{\V}{\mathcal{V}}
\newcommand{\E}{\mathcal{E}}
\newcommand{\M}{\mathcal{M}}
\newcommand{\MM}{\mathfrak{M}}
\newcommand{\CC}{\mathcal{C}}
\newcommand{\tT}{\tilde{T}}
\newcommand{\bxi}{\bar{\xi}}
\newcommand{\bdX}{\mathbf{X}}
\newcommand{\rd}{\mathrm{d}}
\begin{document}

\title[Multi-group models for vector-borne diseases]{On the dynamics of a class of  multi-group  models for vector-borne diseases
}

\author{ Aberrahman Iggidr}
\author  {Gauthier Sallet}
\address[A. Iggidr and G. Sallet]{Inria, Universit\'e de Lorraine, CNRS. Institut Elie Cartan de Lorraine, UMR 7502. ISGMP Bat. A, Ile du Saulcy, 57045 Metz Cedex 01, France. }

\author{Max O. Souza}
\address[M.O. Souza]{Departamento de Matem\'atica Aplicada, Universidade Federal Fluminense,
R. M\'ario Santos Braga, s/n
Niter\'oi - RJ, 24020-140, Brasil.}

\thanks{This  work was supported by the CNPq Pronex/Dengue network grant \# 550030/2010-7 and by  COFECUB/CAPES project 709-2010. MOS was partially supported by CNPq grant \# 308113/2012-8.}

\keywords{vector-borne diseases, Metzler matrix, network models, global stability, Lyapunov functions, multigraphs.}

\subjclass[2010]{34D20, 34D23, 37N25, 92D30}

\begin{abstract}
The resurgence of vector-borne diseases is an increasing public health concern, and there is a need for   a better understanding of their dynamics.  For a number of  diseases, e.g. dengue and  
chikungunya, this resurgence occurs  mostly in urban environments,  which are naturally very  heterogeneous, particularly due to population circulation.  In this scenario,  there is an increasing interest in both multi-patch and multi-group models for such diseases.   In this work, we study the dynamics of a vector borne disease within  a class of  multi-group models that extends the classical Bailey-Dietz model. This class includes many of the proposed models in the literature, and it can accommodate various functional forms of the infection force.  For such models, the vector-host/host-vector  contact   network  topology gives rise  to a bipartite graph which has  different  properties from  the ones usually found in directly transmitted diseases.  Under the assumption that the contact network is strongly connected, we can define the basic reproductive number $\RR_0$ and show that  this system has only two equilibria: the so called disease free equilibrium (DFE); and a unique interior equilibrium---usually termed the endemic equilibrium (EE)---that exists if, and only if, $\RR_0>1$.  We  also show that, if $\RR_0\leq1$, then the DFE equilibrium is globally asymptotically stable, while when $\RR_0>1$, we have that   the  EE is globally asymptotically stable.
\end{abstract}

\maketitle


\section{Introduction}

\subsection{Background}

The global resurgence of vector-borne diseases is  a growing concern for public health officers in  many countries \cite{Gubler}. Diseases like dengue and chikungunya continue   to spread all over the world, hand in hand with the spread of their associated vectors; cf. \cite{Powers:2000kl}. Thus, in the United States the {\it Aedes albopictus}, the tiger mosquito,  is fixating very rapidly , while in Europe  {\it Ae. albopictus} is  also spreading at a fast rate---cf. \cite{Lambrechts:2010kh}. The result of this fixation  is already evident: Italy and the South of France have already had documented cases of  chikungunya \cite{cdc:2014}, and there is a growing number of dengue cases detected in the US \cite{anez:rios:2013}. Furthermore, dengue is now the leading cause in US  of acute febrile state of 
travelers returning from Asian, South American and Caribbean countries \cite{cdc:2010}.  In the particular case of dengue, the main vector, \textit{Ae. Aegypti}, is  anthropophilic, and it lives only on urban or semi-urban areas. It is also a very sedentary mosquito: it will usually fly no more than about five hundred meters from its birth place, unless in extreme adverse conditions. These observations suggest that one  should not expect that dengue will spread through the diffusion of the vector.

Indeed a  number of such resurgent diseases occur in highly urban areas and are transmitted by vectors that do not disperse very far compared to other species---cf. \cite{Honorio:etal:2003} and references therein. On the other hand,   in the case of a urban area  with an efficient  transportation system,  movements from one location to another are fast. Then, for a given individual, disease transmission will mostly likely happen either at  its home region or   at  its usual  destination location. In this scenario, susceptible individuals can become infected in areas that are geographically apart from their residence area, and  infected individuals can travel quite long distances and be able to infect vectors in very distinct areas were they themselves infected. Since the  disease dynamics is likely to be  largely dependent on whether one has a homogeneous or a  heterogeneous population, with heterogeneity 
favoring the establishment of epidemics---cf. \cite{Hasibeder,Levin,Smith}---this suggests that  in areas with significant population movement, the epidemiological dynamics can be strongly influenced  by the circulation of human hosts. The link between host circulation and the disease dynamics   seems to be first pointed out by \cite{Adams,Cosner2009,Stoddard} in slightly different frameworks. In any case, circulation naturally segregates host and vector by their registered and current location, and   it is then natural to consider the so-called meta-population models as candidates for modeling the disease dynamics. Such  meta-population  models can be either of  multi-patch or of multi- group type. In some regimes, the latter can arise as  a limit model of the former---eg.  in the case  of fast sojourn times; cf. \cite{Adams}. 

The previous discussion suggests that   the use of  multi-group models might become a valuable  modeling tool for understanding the disease dynamics in urban settings, and indeed there is a growing interest in the literature on these models. See \cite{Smith:etal:2014} for a recent review on such models,  and for a discussion on their importance in the epidemiological modeling, and \cite{mpolya:etal:2013} for a study  in a star network.  In addition, see also \cite{cambodia} for empirical studies on the impact of human movement on the disease dynamics and \cite{Alvimetal2013} for complementary views to \cite{Adams,Cosner2009}.  For a theoretical review on multi-group models, see \cite{MR1993355}.

The overall interest in these epidemic models have, in turn,   raised a natural  interest  in  understanding   their qualitative dynamical properties. This has fostered a considerable literature addressing this problem, and  which we now briefly review.

\subsection{Disease dynamics}

From the point of view of epidemiological mathematical modeling, the first natural question about any  disease-dynamics model is  what are its stability features as a function of the basic reproduction number, $\RR_0$. Following \cite{Shuai:Driessche:2013},  we say than an epidemic model has the \textit{sharp $\RR_0$ property} if the following holds:  when $\RR_0\leq1$, the only feasible equilibrium is the so-called disease free equilibrium (DFE), and it is globally asymptotically stable (GAS);  when $\RR_0>1$, there is a single interior equilibrium, the so-called endemic equilibrium (EE), which is then GAS.

The literature on mathematical epidemiology and the study of Sharp $\RR_0$ property is long and large,  particularly for directly transmitted diseases, but it is considerably smaller for vector-borne diseases. The development of the models for indirectly transmitted diseases can be traced back to Ross malaria model as discussed in \cite{Ross1911}---see also the recent review in \cite{Smith1} and the classical monographs \cite{Bailey75,Dietz75}.  Nevertheless, the bulk of the theory in the literature is leaned towards directly transmitted diseases and uniform populations---see \cite{AndMay91,Diekmann:2000} for instance. For vector-borne diseases, a very natural model is the coupling of a SIR model for the humans with a SI model. This model is reasonable for mosquito borne diseases, since they do not have a well developed immunological system, while most of the arboviruses confer lifetime stability. This model seems to be first suggested in \cite{Bailey75,Dietz75} and it is now known  as the Bailey-Dietz model. 
The global dynamics of this model was first studied in   \cite{9656647} using    a Lyapunov function argument for   the stability of the DFE, while  the Poincar\'e-Bendixson property for 3-D competitive systems  is used to show  the stability of the EE;  see also \cite{MR2653590,MR2559889} for later similar studies. A global stability analysis using only Lyapunov functions has been obtained only recently---\cite{Souza2014}. See also \cite{0533.92023,Lietal:1999} for various results on global stability of epidemiological models.

In the framework of multi-group epidemic models for directly transmitted diseases, the first paper was probably by Rushton and Mauser \cite{MR0069470}, but seminal results are  in  Lajmanovich and  Yorke  \cite{LajYo76} and in the book of Hethcote and Yorke  \cite{HethYor84}; but see also \cite{Nold:1980}. Stability results can be found in Thieme \cite{0533.92023,MR87c:92046};  see  also chapter 23 of \cite{MR1993355}.   Global stability of multi-group SIR model  is due to \cite{gls:2006}  by using a combinatorial argument arising from graph theory; see also \cite{gls:2008} for  a more extensive presentation of their method.   For indirectly transmitted diseases, the first global stability result seems to be due to   \cite{Hasibeder}, who observed  that a monotone dynamics  argument of \cite{LajYo76} was also applicable to a SI-SI multi-group model. More recently, general global stability results were obtained by \cite{Shuai:Driessche:2013}; see also \cite{Guo:etal:2012} for results on multi-stage models. None of these results, however, cover the case of vector-borne diseases, since vector and host populations might follow different dynamics.   Additional references in meta-population  models for vector-borne diseases, but without studying the sharp $\RR_0$ property are \cite{Honorio:2009uq,MBS6900} for models with  heterogeneous populations and \cite{Xiao:Zou:2014}  for a  numerical study of a multi-patch  model with spatial heterogeneities. 

For higher dimensional systems, global stability of endemic equilibrium is usually done by finding an appropriate Lyapunov function---\cite{Hasibeder} being a notable exception. The use of Lyapunov functions to study the global dynamics of ecological and epidemiological models can be traced at least to the works in the late seventies of  Goh  \cite{goh:1977,goh:1978,goh:1979,goh:1980}, Harrison \cite{Harrison:1979,Harrison:1979b} and Hsu \cite{Hsu:1978}.  Since then, it has been successfully used in many studies, and even rediscovered \cite{Freedman:So:1985,Koro:2001,KoroMMB04,Korobeinikov:Maini:2004,Koro:2006,MR2434863}. Recent applications of Lyapunov functions in  epidemic and ecological  models with meta-populations  include \cite{Iggidr:etal:2006,Koro:2009,Yu:etal:2009,Li:etal:2010,Li:Shuai:2010,Ji:etal:2011,Kuniya:2011,Souza:Zubelli:2011,Sun:Shi:2011,Guo:etal:2012,Huang:etal:2012,Shuai:Diressche:2012,Magal:McCluskey:2013,Wang:2014}. See also the recent surveys on the construction and use of Lyapunov functions in models of population dynamics by \cite{Hsu:2005,MR2434863}. Additionally, there  is also recent work  aiming to obtain similar results for multi-group models, but  without recurring to graph theoretic arguments \cite{Li:etal:2012,muroya:etal:2013}.   Shuai and van den Driessche~\cite{Shuai:Driessche:2013} discuss  two systematic approaches (graph-theoretic and matrix-theoretic) to guide the construction of Lyapunov functions.  For results  towards  infinite dimensional problems, see \cite{Thieme:2011}.

In this work, we show that the sharp $\RR_0$ property holds for a very natural multi-group extension of the  Bailey-Dietz model---that has been used  to model, \textit{inter alia}, the dynamics of dengue \cite{Nishiura:2006}. This extension also accommodates a large number  of choices for the  modeling of the infection-force, including the most popular ones---see \S\ref{sec:prelims} for an additional discussion on this issue. A special case within the class of models discussed here was  studied in \cite{ding:etal:2012} which, however, present an incorrect proof of the global stability of the endemic equilibrium\footnote{The matrix whose kernel should yield the coefficients for Lyapunov function is actually not singular for $n>2$. For $n=2$, a careful checking shows that the claimed cancellation properties  do not hold.}. This work can also seen as an  extension of the multi-group framework for direct-transmitted diseases  in \cite{gls:2006,gls:2008}.

\subsection{Outline}

In Section~\ref{sec:prelims} we introduce the relevant class of multi-group models and identify the relevant network structure, which is a bipartite graph, that we term the host/vector network. This  bipartite graph can be reducible, even when the group network is strongly connected. This is markedly different from directly transmitted diseases. On the assumption that  the host/vector network is strongly connected, we can meaningfully define an $\RR_0$.  For the models discussed here, the existence and uniqueness of the Endemic Equilibrium (EE) , when,  $\RR_0>1$ is not obvious from the governing equations,  and these issues are  tackled in Section~\ref{sec:eqstab}, where the local stability is also established.   We then study the global dynamics in section~\ref{sec:global}:   when $\RR_0\leq1$, we show that  the  disease free equilibrium (DFE) is globally asymptotically stable .  We then address the global stability of the EE and,  we then show that it is  globally asymptotically stable when $\RR_0>1$ using a "vectorial" extension of the  Lyapunov function used in \cite{Souza2014} together with an extension of the graph-theoretical approach developed in \cite{gls:2006,gls:2008}. A discussion of the results is given  in Section~\ref{sec:concl}.

\section{A class of multi-group models for vector-borne diseases}

\label{sec:prelims}

In the following, we provide the basic set up for a class of multi-group models for indirectly transmitted diseases. These models are built upon the classical single-patch/group model by \cite{Bailey75,Dietz75}, and include some of the models studied in \cite{Adams,Cosner2009} and the models studied in \cite{Alvimetal2013}. 

\subsection{ The basic model} 
\label{ssec:onep}

We consider the classical Bailey-Dietz   model:
\begin{equation}\label{1patch}
\left \{
\begin{array}{ll}
\dot  S_h=&\Lambda_h- \beta_1 \, \dfrac{S_h \, I_v}{N_h}  -\mu_h \, S_h \\
 \dot I_h=& \beta_1 \, \dfrac{S_h \, I_v}{N_h} - \gamma_h \, I_h- \mu_h \, I_h \\
 \dot R_h=& \gamma_h \, I_h-\mu_h \, R_h \\
 \dot S_v =& \Lambda_v - \beta_2 \, \dfrac{S_v \, I_h}{N_h} -\mu_v \, S_v\\
 \dot I_v= & \beta_2 \, \dfrac{S_v \, I_h}{N_h} -\mu_v \, I_v,
\end{array}
\right.
\end{equation}
where $S_h$, $I$, $R$ denote, as usual, the class of susceptible, infections and removed, respectively. The superscripts $h$ and $v$ indicate that the quantity refers to the host or to the vector. Also, $N_h=S_h+I_h+R_h$ and $N_v=S_v+I_v$ are the total host and vector, respectively, populations. Although they are not necessarily constant, they are taken as so in many applications.

The constant $\beta_1$ is a composite  biological constant that embodies all the biological processes relating to transmission from mosquito to man, from the biting rate of the mosquitoes through the probability to develop and infection after a bite. Analogously $\beta_2$ captures the effect of transmission from man to mosquito. The constant $\mu_h$ is the per capita human mortality, $\gamma_h$ denotes the per capita rates at which infectious individual recover and are permanently immune. The parameter $\Lambda_v$ is the constant recruitment of mosquitoes and $\mu_v$ is the per capita vector mortality.

Let
\[
\mathbf N = \dfrac{\Lambda_h}{\mu_h}  \textbf{ and }  \mathbf V=\dfrac{\Lambda_v}{\mu_v}.
\]

Using the techniques in \cite{VddWat02}, it is straightforward to see that the reproduction number of (\ref{1patch}) is

\[\mathcal R_0^2=    \dfrac{ \beta_1\,  \beta_2\,\mathbf V}{\mu_v\, (\mu_h+\gamma_h)\,  \mathbf N \,}=  \dfrac{ \beta_1\,  \beta_2\,\mathbf m}{\mu_v\, (\mu_h+\gamma_h)\,   \,}\]

with $\mathbf m= \dfrac{\mathbf V}{\mathbf N}$, the classical vectorial density. The basic reproduction ratio  $\mathcal R_0$ is the same than for a classical Ross's model \cite{AndMay91,MBS6900,Bailey75,Ross1911}.

As for Ross 's model we will use the prevalences, i.e.,  defining $x_1=\dfrac{S_h}{N}$, $x_2=\dfrac{I_h}{N}$, $x_3=\dfrac{R_h}{N}$ and $y_1=\dfrac{S_v}{V}$, $y_2=\dfrac{I_v}{V}$. Then, two equilibria are possible : the disease free equilibrium $( \mathbf 1, \mathbf 0, \mathbf 0,\mathbf 1,\mathbf 0)$ and, when $\mathcal R_0 >1$,  
a positive endemic equilibrium  $(\bar x_1, \bar x_2, \bar x_3, \bar y_1, \bar y_2)$.

The global stability of \eqref{1patch} was originally studied by \cite{9656647}, who showed that the  endemic equilibrium is globally asymptotically stable when $\mathcal R_0>1$, and that the disease-free is the global attractor when $\mathcal R_0\leq 1$. using the so-called Poincar\'{e}-Bendixson theorem for competitive systems---cf. \cite{0821.34003}.  More recently, \cite{Souza2014} has obtained a proof using only Lyapunov functions

\subsection{A class of multi-group models for vector-borne diseases}

We consider that  both  host and vector populations are divided in  $n$ groups, where  each group  $i$ has a host population of $N_{h,i}$ and a vector
population of $N_{v,i}$. At each node $i$, we assume  a generalized form of \eqref{1patch} by allowing that the susceptible of group $i$ to  have contact of mosquitoes of group $j=1,\ldots,n$. This is specified by an infection term for the host $\T_h$,  of the form
\[
\T_{h,i}=S_{h,i}\sum_{j=1}^nL_{i,j}(N_h,N_v)I_{v,j}.
\]
Analogously,  we allow susceptible mosquitoes of each group  $i$ to have contact with infected hosts group $j=1,\ldots,n$, with an infection force for the vectors, $\T_v$, of the form:
\[
\T_{v,i}=S_{v,i}\sum_{j=1}^nM_{i,j}(N_h,N_v)I_{h,j}.
\]
These assumptions then lead to the following multi-group epidemic model:
\begin{equation}\label{npatch2}
\left \{
\begin{array}{ll}
\dot  S_{h,i}=&\Lambda_{h,i}-    S_{h,i} \, \sum_{j=1}^n \, L_{i,j}(N_h,N_v)I_{v,j}     -\mu_{h,i} \, S_{h,i} \\[2mm]
 \dot I_{h,i}=& S_{h,i} \, \sum_{j=1}^n \, L_{i,j}(N_h,N_v)I_{v,j}  - \gamma_{h,i} \, I_{h,i}- \mu_{h,i} \, I_{h,i} \\[2mm]
 \dot R_{h,i}=& \gamma_{h,i} \, I_{h,i}-\mu_{h,i} \, R_{h,i} \\[2mm]
 \dot S_{v,i} =&\Lambda_{v,i} - S_{v,i} \, \sum_{j=1}^n \, M_{i,j}(N_h,N_v)I_{h,j}   -\mu_{v,i} \, S_{v,i}\\[2mm]
 \dot I_{v,i}= &  S_{v,i} \, \sum_{j=1}^n \, M_{i,j}(N_h,N_v)I_{h,j}   -\mu_{v,i} \, I_{v,i},
\end{array}
\right.
\end{equation}
where 
\[
N_h=(N_{h,i}),\text{ with } N_{h,i}=S_{h,i}+I_{h,i}+R_{h,i}
\text{ and }
N_v=(N_{v,i}),\text{ with } N_{v,i}=S_{v,i}+I_{v,i}.
\]
The functions $L_{i,j},M_{i,j}:\R^n\oplus\R^n\to\R$ are assumed to  be smooth and positive when $N_h,N_v$ have positive entries. These are mild assumptions, and they can accommodate a variety of functional forms for the infections force---see \cite{Wonham} for a discussion on the different conclusions implied by different assumptions on the infection force; see also  \cite{Alvimetal2013} for a discussion on the different transmission force related to dengue. These functions also encode the  cross-infection information  among all the groups, which will depend on the modeling assumptions that led to the multi-group structure. 

\begin{remark}
Similar models have been considered in the literature. See \cite{Cosner2009} for a multi-group SIS-SI model and \cite{Adams} for a  multi-group SEIR-SEI model,  obtained as the fast sojourn limit of a more general model. 
\end{remark}

\begin{remark}
While model \eqref{npatch2} can be easily modified to include disease induced death, the analysis carried out in the sequel cannot be extended to such models, except in the case of constant population. However, for diseases as dengue or chikungunya, this is  not a very restricting assumption,  as their morbidity is, generally,  not high. Dengue  can be an exception to that, if there are two epidemics in a row with an intermediate time spacing. In this case, enhanced immunological reaction can cause the so-called severe dengue fever, previously known as haemorraghic dengue, which can be highly fatal if not treated appropriately \cite{who:dengue,Gubl98}.
\end{remark}

We can rewrite \eqref{npatch2} as

\begin{equation}\label{npatch3}
\left \{
\begin{array}{ll}
\dot N_{h,i}=&  \Lambda_{h,i} -\mu_{h,i} \, N_{h,i} \\[2mm]
 \dot N_{v,i} =& \Lambda_{v,i}  -\mu_{v,i} \, N_{v,i}\\[2mm]
\dot  S_{h,i}=&\Lambda_{h,i}-  S_{h,i} \, \sum_{j=1}^n \, L_{i,j}(N_h,N_v)I_{v,j}    -\mu_{h,i} \, S_{h,i} \\[2mm]
 \dot I_{h,i}=&  S_{h,i} \, \sum_{j=1}^n \, L_{i,j}(N_h,N_v)I_{v,j}  - \gamma_h \, I_{h,i}- \mu_{h,i} \, I_{h,i} \\[2mm]
 \dot I_{v,i}= &  (N_{v,i}-I_{v,i}) \, \sum_{j=1}^n \, M_{i,j}(N_h,N_v)I_{h,j}   -\mu_{v,i} \, I_{v,i}.
\end{array}
\right.
\end{equation}
In what follows, we write $S_h=(S_{h,i})$, $i=1,\ldots,n$ and similarly for $I_h$ and $I_v$. Also, let
\[
\bar{N}_h=\left(\frac{\Lambda_{h,i}}{\mu_{h,i}}\right)
\text{ and }
\bar{N}_v= \left(\frac{\Lambda_{v,i}}{\mu_{v,i}}\right).
\]
Then, it is clear that, for \eqref{npatch3}, the set
\[\Omega= \{ (S_h,I_h,I_v,N_h, N_v) \in \R_+^{5n} \arrowvert   \;\;   0 \leq S_h+I_h \leq \bar{N}_h\,\;\;  0\leq  I_v \leq \bar{N}_v,\;\; 0\leq N_h\leq \bar{N}_h,\;\; 0\leq N_v\leq \bar{N}_v\}\]
is  a compact absorbing and positively invariant  set. 

Also, notice that  the system \eqref{npatch3} is of triangular form, and hence its stability analysis can be considerably simplified. There are a number of results that allow for such a simplification in the  study of global stability of systems of this kind\cite{0478.93044,Thieme:1992}. For the convenience of the reader, we recall the following result:
\begin{theorem}[Vidyasagar \cite{0478.93044}, Theorem 3.1]\label{Vid:theo}:

\noindent
Consider the following   $\mathcal C^1$ system :
 \begin{equation}\label{triang}
\left\{
\begin{array}{lr}
\dot x = f(x) &  x \in \R^n \; , y \in \R^m\\
\dot y = g(x,y)& \\
\text{ \rm with an equilibrium  point,  }(x^*,y^*), \text{ \rm    i.e.,} &\\
 f(x^*)=0   \mbox{\;} \text{\rm            and      }   g(x^*, y^*)=0 .&
\end{array}
\right.
\end{equation}

If  $x^*$ is  globally asymptotically stable (GAS)  in $\R^n$for the system  $\dot x= f(x)$,  and if  $y^*$ is GAS in   $\R^m$, for the system  $\dot y =g(x^*,y)$,  then   $(x^*,y^*)$ is (locally) asymptotically stable for  (\ref{triang}). Moreover, if all the trajectories of   (\ref{triang} ) are forward bounded,  then   $(x^*,y^*)$ is GAS for (\ref{triang}).
\end{theorem}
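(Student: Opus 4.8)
The plan is to establish the two conclusions separately: first local asymptotic stability of $(x^*,y^*)$ for the full cascade \eqref{triang}, and then, under the additional forward-boundedness hypothesis, global attractivity; combining global attractivity with the (already established) Lyapunov stability then yields GAS. The two engines are, respectively, the theory of \emph{stability under persistent perturbations} (total stability) for the local claim, and the theory of \emph{asymptotically autonomous systems} for the global claim. The whole point is to exploit the triangular structure: the $x$-equation is decoupled and drives the $y$-equation only through the term $g(x(t),y)$, which---because $x(t)\to x^*$---is a vanishing perturbation of the limiting autonomous system $\dot y=g(x^*,y)$.

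For local asymptotic stability one cannot simply linearize: GAS of $x^*$ for $\dot x=f(x)$ does not force the Jacobian $\partial_x f(x^*)$ to be Hurwitz (e.g. $\dot x=-x^3$), so the block-triangular Jacobian at $(x^*,y^*)$ may carry eigenvalues on the imaginary axis. Instead I would argue as follows. Since $\dot y=g(x^*,y)$ is autonomous, asymptotic stability of $y^*$ is automatically \emph{uniform} asymptotic stability, and uniform asymptotic stability of an equilibrium implies total stability: there is a neighbourhood of $y^*$ on which the equilibrium remains stable under any sufficiently small persistent perturbation of the vector field. Now write the $y$-equation as $\dot y=g(x^*,y)+\bigl[g(x(t),y)-g(x^*,y)\bigr]$. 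By stability of $x^*$ we may choose $x_0$ so close to $x^*$ that $x(t)$ stays in a prescribed small neighbourhood of $x^*$ for all $t\ge0$; by continuity of $g$ the bracketed term is then a uniformly small persistent perturbation, and total stability keeps $y(t)$ near $y^*$. This delivers Lyapunov stability of $(x^*,y^*)$. For local attractivity, once the orbit is confined to a small neighbourhood, $x(t)\to x^*$ makes the bracketed perturbation tend to zero, and the asymptotically-autonomous argument below, restricted to a neighbourhood inside the basin of $y^*$, then yields $y(t)\to y^*$.

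For global attractivity I would fix an arbitrary initial condition and use the boundedness hypothesis: the forward orbit lies in a compact set, so its $\omega$-limit set is nonempty and compact. By the GAS hypothesis on the $x$-subsystem, $x(t)\to x^*$, whence the $y$-equation $\dot y=g(x(t),y)$ is asymptotically autonomous with limiting equation $\dot y=g(x^*,y)$. Since $y^*$ is GAS for this limiting equation, the convergence theory for asymptotically autonomous systems \cite{Thieme:1992}---whereby the $\omega$-limit set of the nonautonomous flow is an invariant, chain-transitive set of the limiting semiflow, which here reduces to the singleton $\{y^*\}$---gives $y(t)\to y^*$. Together with $x(t)\to x^*$ this shows $(x(t),y(t))\to(x^*,y^*)$ for every trajectory, i.e. global attractivity. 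Combined with the Lyapunov stability established above, this proves that $(x^*,y^*)$ is GAS for \eqref{triang}.

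The main obstacle is the \emph{local stability} statement rather than attractivity: the absence of hyperbolicity rules out linearization, and the coupling term must be controlled uniformly in time, which is precisely what the total-stability (Malkin-type) argument supplies once one observes that autonomy upgrades asymptotic stability to uniform asymptotic stability. A secondary technical point is to guarantee, in the asymptotically-autonomous step, that the relevant orbit remains in the basin of attraction of $y^*$---granted locally by smallness and globally by the forward-boundedness assumption.
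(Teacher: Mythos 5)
The paper itself offers no proof of this statement: it is imported verbatim, with citation, from Vidyasagar's 1980 paper, so there is no internal argument to compare yours against; I can only judge your proposal on its merits, and it is correct. Your local argument correctly identifies the genuine obstruction (GAS of the $x$-subsystem carries no hyperbolicity, so linearization is unavailable) and the chain ``autonomous asymptotic stability $\Rightarrow$ uniform asymptotic stability $\Rightarrow$ total stability (Malkin)'' is a legitimate way to obtain Lyapunov stability of $(x^*,y^*)$, with the confinement-plus-vanishing-perturbation step then yielding local attractivity. The global step via Markus--Thieme asymptotically autonomous theory is also sound, with one point you should make explicit: the claim that the chain-transitive (or merely compact invariant) sets of the limit flow reduce to $\{y^*\}$ uses the \emph{stability} half of GAS, not just attractivity---with attractivity alone the conclusion, and indeed the whole convergence theory, can fail. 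It is worth noting that for the global part you can bypass Thieme's machinery entirely, since the cascade \eqref{triang} is itself autonomous: the $\omega$-limit set $\Omega$ of a forward-bounded trajectory is compact and invariant under the full flow; since $x(t)\to x^*$ it lies in the slice $\{x^*\}\times\R^m$, on which the full flow restricts exactly to $\dot y=g(x^*,y)$; and a compact invariant set $M$ of a flow with a GAS equilibrium is the singleton, by an elementary argument (if a full orbit $\gamma\subset M$ passes through $p\neq y^*$, then $y^*\in\alpha(\gamma)$, and stability of $y^*$ applied at times $t_k\to-\infty$ with $\gamma(t_k)\to y^*$ forces $p$ to be arbitrarily close to $y^*$, a contradiction). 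This elementary route is closer in spirit to the classical proofs (Vidyasagar's own argument runs through converse Lyapunov constructions), whereas your version trades that for citable general theorems; both are valid.
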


Since $(\bar{N}_h,\bar{N}_v)$ is a globally asymptotically stable equilibrium for the first two equations of \eqref{npatch3}, 
 we can use Theorem~\ref{Vid:theo} to reduce the study of the stability properties of \eqref{npatch3} to the study of the stability of 
\begin{equation}\label{npatch3b}
\left \{
\begin{array}{ll}
\dot  S_{h,i}=& \D \Lambda_{h,i}-  S_{h,i} \, \sum_{j=1}^n \, L_{i,j}(\bar{N}_h,\bar{N}_v)I_{v,j}    -\mu_{h,i} \, S_{h,i} \\
\\
 \dot I_{h,i}=& \D  S_{h,i} \, \sum_{j=1}^n \, L_{i,j}(\bar{N}_h,\bar{N}_v)I_{v,j}  - \gamma_{h,i} \, I_{h,i}- \mu_{h,i} \, I_{h,i} \\
 \\
 \dot I_{v,i}= &  \D 
(N_{v,i}-I_{v,i}) \, \sum_{j=1}^n \, M_{i,j}(\bar{N}_h,\bar{N}_v)
I_{h,j} -\mu_{v,i} \, I_{v,i}.
\end{array}
\right.
\end{equation}
In what follows, we shall denote by $\Lambda_h$, $\mu_h$ and $\gamma_h$ the vectors of 
$\R^n_+$ whose components are respectively $\Lambda_{h,i}$, $\mu_{h,i}$ and $\gamma_{h,i}$. We shall also write  $M=M(\bar{N}_h,\bar{N}_v)$ and $L=L(\bar{N}_h,\bar{N}_v)$.
System~\eqref{npatch3b} can then be written in the following vectorial notation: 
\begin{equation}\label{npatch4}
 \left\{
\begin{array}{ll}
\dot S_h=& \Lambda_h \,  - \mathrm{diag}(S_h)\, L\, I_v - \diag(\mu_h)  S_h\\
\\
\dot I_h=& \mathrm{diag}(S_h)\, L\, I_v - \diag(\mu_h+\gamma_h)  I_h\\
\\
\dot I_v=& \mathrm{diag}(\bar N_v-I_v) \, M \, I_h -\diag(\mu_v) I_v\,,
\end{array}
\right.
\end{equation}
where for  $\mathbf{v}\in\R^n$,  $\diag(\mathbf{v})$ denotes the $n\times n$ diagonal matrix whose main diagonal is $\mathbf{v}$.
\subsection{The Host-Vector contact network}

We shall need an assumption about the network topology  in system~\eqref{npatch4}.   For a matrix $A$, we write $\Gamma(A)$ for the associated graph. We begin with a definition
\vspace{3mm}
\begin{definition}[Host-Vector Contact Network]
	\label{def:hvcn}
Given nonnegative matrices $L$ and $M$, we write
\[
\mathcal{M}=\begin{pmatrix}
0&L\\
M&0
\end{pmatrix}.
\]
The graph associated to $\mathcal{M}$, $\Gamma(\mathcal{M})$, is denoted the host-vector contact network, or contact network for short.
\end{definition}
\vspace{3mm}

\begin{hyp}
\label{hyp:1}
The contact network is strongly connected, i.e.,  $\mathcal{M}$ is nonnegative and irreducible.
\end{hyp}
\vspace{3mm}
\begin{remark}
Notice that irreducibility of $L$ and $M$ are neither necessary nor sufficient for the irreducibility of $\mathcal{M}$. As an example, consider
\[
C=\begin{pmatrix}
0&1\\
1&0
\end{pmatrix}
\text{ and }
D=\begin{pmatrix}
1&0\\
1&1
\end{pmatrix};
\quad 
\mathcal{M}_1=
\begin{pmatrix}
0&C\\
C&0
\end{pmatrix}
\text{ and }
\mathcal{M}_2=
\begin{pmatrix}
0&D^t\\
D&0\\
\end{pmatrix}.
\]
Then $C$ is irreducible and $D$ is reducible. Nevertheless, $\mathcal{M}_1$ is reducible and $\mathcal{M}_2$ is irreducible.
\end{remark}

\vspace{3mm}

The irreducibility of $\mathcal{M}$ is associated to the strong connectivity of the corresponding directed bipartite graph. This is a consequence of the infection process, when considered  between hosts (or vectors) themselves, is a two step process. Thus, even when the circulation structure (the non-zero patterns of $L$ and $M$) is strongly connected, this is not necessarily the case for the host-vector contact structure of an indirectly transmitted disease, and this is a significant difference to directly transmitted ones.

In the following Proposition we shall give a useful characterization of the irreducibility of $\M$ that will be used later on:
\begin{proposition}
\label{prop:factors_irred}
 $\M$ is irreducible if, and only if, the following conditions are satisfied:
 \begin{enumerate}
 	\item Both  $LM$ and $ML$ are irreducible;
 	\item We have that $Lv,Mv\gg0$, for some  $v\gg0$ (and hence, for every $v\gg0$).
 \end{enumerate}
 Moreover, in this case, we also have that
 \[
 \rho(\M)^2=\rho(LM)=\rho(ML),
 \]
 and that both $LM$ and $ML$ have right and left positive eigenvectors associated to $\rho(\M)^2$.
\end{proposition}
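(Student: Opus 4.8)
The plan is to translate irreducibility of $\M$ into strong connectivity of its associated digraph $\Gamma(\M)$ and to exploit the bipartite block structure through the powers of $\M$. The computational backbone is the elementary identity
\[
\M^{2k}=\begin{pmatrix} (LM)^k & 0\\ 0 & (ML)^k\end{pmatrix},\qquad \M^{2k+1}=\begin{pmatrix} 0 & (LM)^kL\\ (ML)^kM & 0\end{pmatrix},
\]
together with $(LM)^kL=L(ML)^k$ and $(ML)^kM=M(LM)^k$. Since $\Gamma(\M)$ is bipartite, with one node class for hosts and one for vectors, every walk between two host nodes (or two vector nodes) has even length, while every walk from a host node to a vector node (or vice versa) has odd length. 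The displayed identity shows that host-host and vector-vector reachability are governed, respectively, by $LM$ and $ML$, whereas the cross reachabilities are governed by the mixed products $L(ML)^k$ and $M(LM)^k$.

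For the forward implication I would assume $\M$ irreducible and read off the two conditions. Restricting strong connectivity to an ordered pair of distinct host indices $i,j$, the path joining them has even length $2k$, so $((LM)^k)_{ij}=(\M^{2k})_{ij}>0$; hence $\Gamma(LM)$ is strongly connected, i.e.\ $LM$ is irreducible, and the vector-vector pairs give irreducibility of $ML$. For condition (2) I would note that a zero row of $L$ would leave the associated host node without outgoing arcs in $\Gamma(\M)$ (the out-neighbours of a host node lie, via $L$, among the vectors), contradicting strong connectivity; the same applies to $M$. Thus $L$ and $M$ have no zero row, which for a nonnegative matrix is exactly the statement that $Lv,Mv\gg0$ holds for one, equivalently every, $v\gg0$.

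For the converse, assuming (1) and (2), I would verify all four reachability types in $\Gamma(\M)$. Host-host and vector-vector reachability are immediate from irreducibility of $LM$ and $ML$. The cross reachability is the step I expect to require the most care, and the place where condition (2) is genuinely used. To reach a vector node $k$ from a host node $i$, I would use the block $L(ML)^m$ of $\M^{2m+1}$: condition (2) provides an index $l$ with $L_{il}>0$, and irreducibility of $ML$ provides some $m\ge0$ with $((ML)^m)_{lk}>0$, whence $(L(ML)^m)_{ik}\ge L_{il}((ML)^m)_{lk}>0$ and the required walk exists. The symmetric argument, using $M(LM)^m$, the no-zero-row property of $M$, and irreducibility of $LM$, yields vector-to-host reachability. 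Therefore $\Gamma(\M)$ is strongly connected and $\M$ is irreducible. Condition (2) is precisely what guarantees that one can leave a host (resp.\ vector) node toward the other class, and it is indispensable in low dimension (e.g.\ $n=1$).

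Finally, for the spectral part I would invoke Perron--Frobenius for the irreducible nonnegative matrix $\M$: let $\rho=\rho(\M)>0$ and let $w=(p,q)\gg0$ be its strictly positive right eigenvector, split according to the host and vector blocks. The eigenvalue equation $\M w=\rho w$ gives $Lq=\rho p$ and $Mp=\rho q$, whence $LM\,p=\rho^2 p$ and $ML\,q=\rho^2 q$ with $p,q\gg0$. Since $LM$ and $ML$ are irreducible, Perron--Frobenius guarantees that $\rho(LM)$ and $\rho(ML)$ are the unique eigenvalues admitting a positive eigenvector, forcing $\rho^2=\rho(LM)=\rho(ML)$, i.e.\ $\rho(\M)^2=\rho(LM)=\rho(ML)$. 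Repeating the computation with the positive left Perron eigenvector of $\M$ produces positive left eigenvectors of $LM$ and $ML$ associated to $\rho^2$, which completes the ``moreover'' part.
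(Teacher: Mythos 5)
Your proof is correct and follows essentially the same route as the paper: the same block identity for the powers of $\M$, irreducibility of $LM$, $ML$ and the no-zero-row condition read off from those blocks (you phrase this via strong connectivity and walk parity, the paper via positivity of $(I+\M)^{2n}$, which are equivalent characterizations), and the same Perron--Frobenius computation splitting the positive eigenvector of $\M$ to get the spectral identities. If anything, your explicit handling of the host-to-vector reachability in the converse and your explicit appeal to the fact that only the spectral radius of an irreducible nonnegative matrix admits a positive eigenvector spell out steps the paper leaves terse.
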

\vspace{3mm}
\begin{proof}
Firstly, we compute
\[
\mathcal{M}^{2k}=\begin{pmatrix}
(LM)^k&0\\
0&(ML)^k
\end{pmatrix}
\quad\text{and}\quad
\mathcal{M}^{2k+1}=\begin{pmatrix}
0&L(ML)^k\\
M(LM)^k&0
\end{pmatrix},
\]
Assume  $\M$ is irreducible. Then there is some natural $n$ such that  
\[
(I+\M)^{2n}=\sum_{m=0}^{2n}\binom{2n}{m}\M^m=
\begin{pmatrix}
\sum_{k=0}^n\binom{2n}{2k}(LM)^k&L\sum_{k=0}^{n-1}\binom{2n}{2k+1}(ML)^k\\
M\sum_{k=0}^{n-1}\binom{2n}{2k+1}(LM)^k&\sum_{k=0}^{n}\binom{2n}{2k}(ML)^k
\end{pmatrix}
\gg0.
\]
Hence, we have that
\[
(I+ML)^n,(I+LM)^n\gg0,
\]
 and both $LM$ and $ML$ are irreducible as claimed. In addition, we have $L\sum_{k=0}^{n-1}\binom{2n}{2k+1}(ML)^k \gg0$. Thus $L$ applied to a column of $\sum_{k=0}^{n-1}\binom{2n}{2k+1}(ML)^k$ positive. The argument for $M$ is similar.
 
 Conversely, if both $LM$ and $ML$ are irreducible, then we 
have that  the main diagonals of $(I+\M)^{2n}$ are positive. The remaining blocks are also positive, since $L$ and $M$ are acting on positive matrices.
 
 Finally, let $(u,v)$ be a positive eigenvector associated to $\rho(\M)=\rho_\M$. The we necessarily have $Lu=\rho_\M v$ and $Mv=\rho_\M u$. Hence $LMv=\rho^2_\M v$, and similarly $MLu=\rho^2_\M u$. Furthermore $u$ and $v$ are positive right eigenvectors of $ML$ and $LM$, respectively, associated to $\rho_\M^2$. The argument for left eigenvectors is analogous.
\end{proof}

\begin{remark}
We observe that System~\eqref{npatch4} can be recast  as a special case of the multigroup SIR model treated in~\cite{gls:2006}, as follows: Replace $N_v-I_v$ by $S_v$ in the last equation of~\eqref{npatch4}. Include the redundant equation:
\[
\dot S_v= \Lambda_v \,  - \mathrm{diag}(S_v)\, M\, I_h - \diag(\mu_v)  S_v.
\]
Let $\mathbf{S}=(S_1,\ldots,S_{2n})^t$, $\mathbf{I}=(I_1,\ldots,I_{2n})^t$ and set $S_i=S_{h,i}$, $S_{n+i}=S_{v,i}$, $I_i=I_{h,i}$, $I_{n+i}=I_{v,i}$, for $i=1,\ldots,n$. Further, let $\M$ be as given in definition~\ref{def:hvcn} and let $\Lambda=(\Lambda_h\, \Lambda_v)^t$, $\mu=(\mu_h\, \mu_v)^t $, and $\gamma=(\gamma_h\, \mathbf{0})^t$. Then $(\mathbf{S}\,\mathbf{I})^t$ satisfies
\begin{align*}
\dot{\mathbf{S}}&=\Lambda-\diag(\mathbf{S})\M\mathbf{I}-\diag(\mu)\mathbf{S}\\
\dot{\mathbf{I}}&=\diag(\mathbf{S})\M\mathbf{I}-\diag(\mu+\gamma)\mathbf{I},
\end{align*}
for which the sharp threshold property holds.  Nevertheless,  we shall obtain this result  by considering equation \eqref{npatch4} directly, and using  a related but different  approach.  This can be seen as an extension to indirect transmitted  diseases of the framework introduced in \cite{gls:2006,gls:2008}. 

\end{remark}

\section{Equilibria and local stability}  

\label{sec:eqstab}

We will show that
for our vectorial disease with sub-populations structure,  System~\eqref{npatch2},  the results of \cite{MR87c:92046,MR1993355} are conserved.
Namely we
obtain that  the DFE is locally asymptotically stable, iff $\mathcal R_0 \leq
1$, and the existence and uniqueness of a strongly endemic
equilibrium when $\mathcal R_0 >1$. This equilibrium is always locally
asymptotically stable.  For global results, see Section~\ref{sec:global}.

Using the now standard techniques \cite{MR1057044,VddWat02}, we define the basic reproduction ratio as
	\[  \mathcal R_0 = \rho(\NGO),\quad \NGO= \left (
	\begin{array}{ll}
	0 & \diag(\mu_h+\gamma_h)^{-1} \,  \diag(\bar N_h) \,L \\
	\\
	\diag(\mu_v)^{-1}\,  \diag(\bar N_v) \, M &0
	\end{array} \right ).   \]

\begin{remark}
	\label{rem:about_ngo}
Since $\mu_v,\mu_h\gg0$, we have from Proposition~\ref{prop:factors_irred} that $\NGO$ is irreducible if, and only if, $\M$ is irreducible. In particular, if Hypothesis~\ref{hyp:1} holds then $\NGO$ is irreducible and we have that
	\[ \mathcal R_0^2= \rho \left ( \diag(\mu_h+\gamma_h)^{-1}\diag(\mu_v)^{-1} \,  \mathrm{diag}(\bar N_h) \,L   \,  \mathrm{diag}(\bar N_v) \, M \right ).
	\]
\end{remark}

\vspace{3mm}
\begin{theorem}
\label{thm:ee:exstab}
Assume  that hypothesis \ref{hyp:1} holds. Then system~\eqref{npatch4} (and hence system~\eqref{npatch2})  has a unique endemic equilibrium if, and only if, 
$\mathcal R_0>1$. Moreover this equilibrium is locally asymptotically stable with respect to System~\eqref{npatch4}.
\end{theorem}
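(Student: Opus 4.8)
The plan is to prove the statement in two stages, reducing each to a Perron--Frobenius problem governed by $\mathcal R_0$: first the existence and uniqueness of the endemic equilibrium (EE) via a concave fixed-point argument, then its local stability via a Schur reduction of the Jacobian followed by a modulus/spectral-radius estimate.

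\emph{Existence and uniqueness.} Setting the right-hand side of \eqref{npatch4} to zero and solving the host equations componentwise gives $S_{h,i}=\Lambda_{h,i}/(\mu_{h,i}+(LI_v)_i)$ and then $I_{h,i}=\Lambda_{h,i}(LI_v)_i/[(\mu_{h,i}+\gamma_{h,i})(\mu_{h,i}+(LI_v)_i)]$, i.e. $I_h=H(I_v)$; the vector equation gives $I_{v,i}=\bar N_{v,i}(MI_h)_i/(\mu_{v,i}+(MI_h)_i)$, i.e. $I_v=G(I_h)$. Hence an EE is exactly a nonzero fixed point of $\Phi:=G\circ H$ on $\R^n_+$. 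I would verify the three properties that make the sublinear (Krasnoselskii/Hethcote--Thieme) machinery apply: $\Phi$ maps the compact box $[0,\bar N_v]$ into itself, since each coordinate of $G$ is bounded by $\bar N_{v,i}$; $\Phi$ is monotone and concave, being a composition of monotone concave coordinate maps of the Michaelis--Menten type $s\mapsto \alpha s/(s+\beta)$; and $\Phi(0)=0$ with $D\Phi(0)=DG(0)\,DH(0)=\diag(\mu_v)^{-1}\diag(\bar N_v)M\,\diag(\mu_h+\gamma_h)^{-1}\diag(\bar N_h)L$. This is precisely the product $BA$ of the off-diagonal blocks of $\NGO$, so by Proposition~\ref{prop:factors_irred} it is irreducible with $\rho(D\Phi(0))=\rho(\NGO)^2=\mathcal R_0^2$. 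The concave fixed-point dichotomy then gives: if $\mathcal R_0\le1$ the only fixed point is $0$ (no EE), while if $\mathcal R_0>1$ there is a unique nonzero fixed point, strictly positive by irreducibility; unwinding $H$ and $G$ produces the unique componentwise-positive endemic equilibrium.

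\emph{Local stability.} I would compute the Jacobian $J$ of \eqref{npatch4} at the EE $(S_h,I_h,I_v)$. Since $\dot S_h$ does not depend on $I_h$ and $\dot I_v$ does not depend on $S_h$, the susceptible block $A_{11}=-\diag(\mu_h+LI_v)$ is a Hurwitz diagonal block that can be eliminated by a Schur complement: the spectrum of $J$ is the (negative) spectrum of $A_{11}$ together with the roots $\lambda$ of $\det(\mathcal J(\lambda)-\lambda I)=0$, where $\mathcal J(\lambda)$ is the $2n\times 2n$ matrix with diagonal blocks $-\diag(\mu_h+\gamma_h)$ and $-\diag(\mu_v+MI_h)$ and off-diagonal blocks $\diag\!\left(\frac{\lambda+\mu_{h,i}}{\lambda+\mu_{h,i}+(LI_v)_i}\right)\diag(S_h)L$ and $\diag(\bar N_v-I_v)M$. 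I then argue by contradiction: suppose such a root $\lambda$ has $\Re\lambda\ge0$, with eigenvector $(p,q)$. Taking entrywise moduli and using $|\lambda+c|\ge c$ and $\left|\frac{\lambda+\mu_{h,i}}{\lambda+\mu_{h,i}+(LI_v)_i}\right|\le1$ for $\Re\lambda\ge0$, the two block-rows yield $|p|\le \diag(\mu_h+\gamma_h)^{-1}\diag(S_h)L\,|q|$ and $|q|\le \diag\!\left(\frac{\bar N_{v,i}-I_{v,i}}{\mu_{v,i}+(MI_h)_i}\right)M\,|p|$; composing gives $|q|\le K|q|$ with $K\ge0$ and $|q|\ne0$, so $\rho(K)\ge1$.

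The crux is to show instead that $\rho(K)<1$. Here I would use the equilibrium identities: the vector balance $(\bar N_{v,i}-I_{v,i})(MI_h)_i=\mu_{v,i}I_{v,i}$ gives $\mu_{v,i}+(MI_h)_i=\mu_{v,i}\bar N_{v,i}/(\bar N_{v,i}-I_{v,i})$, so the vector block of $K$ is strictly dominated entrywise by $\diag(\mu_v)^{-1}\diag(\bar N_v-I_v)M$, while its host block equals $\diag(\mu_h+\gamma_h)^{-1}\diag(S_h)L$. These are the off-diagonal blocks of the nonnegative irreducible matrix $\NGO_{EE}$ obtained from $\NGO$ by the substitutions $\bar N_h\mapsto S_h$, $\bar N_v\mapsto\bar N_v-I_v$; the equilibrium equations say exactly that $(I_h,I_v)\gg0$ is an eigenvector of $\NGO_{EE}$ for the eigenvalue $1$, so $\rho(\NGO_{EE})=1$ by Perron--Frobenius. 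Thus $K$ is bounded entrywise by the product of the off-diagonal blocks of $\NGO_{EE}$, whose spectral radius is $\rho(\NGO_{EE})^2=1$, with a strict decrease on the (irreducible) support coming from $\mu_v+MI_h>\mu_v$. Strict monotonicity of the spectral radius for irreducible nonnegative matrices gives $\rho(K)<1$, the desired contradiction, and every eigenvalue of $J$ therefore has negative real part. I expect this local-stability step to be the main obstacle: the full Jacobian is genuinely not Metzler (recovery creates a negative feedback loop), so monotone-systems arguments fail and a naive per-coordinate modulus bound is too lossy; the decisive devices are the Schur reduction onto the infection variables, which restores a nonnegative structure for real $\lambda$, and the replacement of the depleted coefficients $S_h<\bar N_h$ and $\bar N_v-I_v<\bar N_v$ by the exact equilibrium identities, which pin the relevant spectral radius at $1$ and enable the strict Perron--Frobenius comparison.
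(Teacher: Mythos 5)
Your proposal is correct, and at the top level it follows the same two-stage strategy as the paper (a Hethcote--Thieme sublinear fixed-point argument for existence/uniqueness, then a Perron--Frobenius modulus contradiction on the linearization), but both stages are implemented by genuinely different devices, so a comparison is worthwhile. For existence and uniqueness, the paper applies Lemma~\ref{thm:thieme} to the $2n$-dimensional map $F(x,y)$ on $(I_h,I_v)$, whose Jacobian at the origin is exactly $\NGO$; you compose down to the $n$-dimensional map $\Phi=G\circ H$ on $I_v$ alone and use Proposition~\ref{prop:factors_irred} to get irreducibility of $D\Phi(0)=BA$ and $\rho(D\Phi(0))=\mathcal{R}_0^2$ --- same key lemma, half the dimension; the one thing you should state explicitly is \emph{strict} sublinearity of $\Phi$ rather than mere concavity (it does hold: for $I_v\gg0$, Proposition~\ref{prop:factors_irred} gives $LI_v\gg0$ and $MH(I_v)\gg0$, so both Michaelis--Menten layers are strictly sublinear where needed), since concavity alone is what the uniqueness part of the dichotomy cannot dispense with. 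For local stability, both proofs assume $\Re\lambda\geq 0$ and contradict Perron--Frobenius using a nonnegative matrix whose spectral radius is pinned at $1$ by the equilibrium identities, but the mechanisms differ: the paper eliminates the $S_h$-perturbation by adding the first two eigenvalue equations, keeps the bipartite $2n\times 2n$ matrix $H$ satisfying $H(I_h^*,I_v^*)^t=(I_h^*,I_v^*)^t$, and gets strictness from the $z$-dependent diagonal factor $1+\eta(z)$ combined with Krasnosel{$'$}ski{\u\i}'s minimal-$r$ trick; you eliminate $S_h$ by a Schur complement of the Jacobian, compose the two modulus inequalities into $|q|\leq K|q|$ with $K$ only $n\times n$, and get strictness from the $\lambda$-independent entrywise gap between $K$ and $K'=\diag(\mu_v)^{-1}\diag(\bar N_v-I_v^*)M\diag(\mu_h+\gamma_h)^{-1}\diag(S_h^*)L$ furnished by the identity $\mu_{v,i}+(MI_h^*)_i=\mu_{v,i}\bar N_{v,i}/(\bar N_{v,i}-I_{v,i}^*)$, finishing with strict monotonicity of the spectral radius under entrywise domination (with $K'$ irreducible, again by Proposition~\ref{prop:factors_irred}, and $K'I_v^*=I_v^*$ giving $\rho(K')=1$). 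Your route buys an off-the-shelf domination theorem in place of the minimal-$r$ trick and works in dimension $n$ with quantitative, $\lambda$-free strictness; the paper's route avoids the Schur reduction and keeps in view the bipartite matrix $H$, which is (up to diagonal similarity) the matrix $\bNGO$ of Lemma~\ref{lem:tt:pev} that drives the global-stability argument of Section~\ref{sec:global}. Two small points to tighten in your write-up: justify $q\neq 0$ (if $q=0$, the first block row forces $p=0$ because $\lambda\neq -(\mu_{h,i}+\gamma_{h,i})$ when $\Re\lambda\geq 0$), and note that the factorization $\det(J-\lambda I)=\det(A_{11}-\lambda I)\det(\mathcal{J}(\lambda)-\lambda I)$ is invoked only off the spectrum of $A_{11}$, which is harmless since that spectrum is real and negative.
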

\vspace{3mm}
\begin{proof}

We denote by $S_h^*$, $I_h^*$ and $I_v^*$ the expression
of an endemic equilibrium. Recall that the notation
$\mathds{1}$ refers to the vector of $\R^n_+$ whose components
are all equal to $1$. We have the following relation, defining an endemic equilibrium:

\begin{subequations}
 \begin{eqnarray}
\Lambda_h &=  \diag(\mu_h  + \, L\, I_v^*  ) \, S_h^*  \label{subeqnendem1}\\
 \diag (\mu_h+\gamma_h) \, I^*_h&=\diag(S_h^*)\, L\, I_v^* \label{subeqnendem2}\\
 \diag( \mu_v )\, I_v^*&=  \diag(\bar N_v-I_v^*) \,M\,I_h^* \label{subeqnendem3}
\end{eqnarray}
\end{subequations}

\n From \eqref{subeqnendem1} we obtain

\[ S_h^*=\diag(\mu_h + \, L\, I_v^*  )^{-1} \, \Lambda_h \]

\n Rewriting \eqref{subeqnendem3} as

\[ \diag(\mu_v) \, I_v^*=  \diag (M\,I_h^*)\,(\bar N_v-I_v^*) \]

Substituting for  $S_h^*$ in \eqref{subeqnendem2} we
obtain

\begin{subequations}
 \begin{eqnarray}
I_h^* &=  \diag(\mu_h+\gamma_h)^{-1} \diag(\mu_h + \, L\, I_v^*  )^{-1} \,\diag(L\,I_v^*) \,\Lambda_h\label{subeqnendem11}\\
   I^*_v&=  \diag(\mu_v  + \, M\, I_h^*  )^{-1} \, \diag(M\,I_h^*) \, \bar N_v \label{subeqnendem12}
\end{eqnarray}
\end{subequations}

\n Hence $(I_h^*,I_v^*)$ is a fixed point of the following
application

\[ F(x,y) =
\begin{bmatrix}
\diag(\mu_h + \gamma_h)^{-1} \diag(\mu_h  + \, L\, y  )^{-1} \,\diag(L \, y) \,\Lambda_h \\
 \\
 \diag(\mu_v  + \, M\, x  )^{-1} \, \diag(M\,x) \, \bar N_v
\end{bmatrix}\]
We will use a result of Hethcote and Thieme \cite{MR87c:92046}, which we recall for the convenience of the reader:

\begin{lemma}[\mbox{Theorem 2.1 in \cite{MR87c:92046}}]
	\label{thm:thieme}

Let $F(w)$ be a continuous, monotone non-decreasing, strictly
sublinear, bounded function which maps  the nonnegative orthant
$\R^n_+= [0, \infty)^n$ into itself. Let $F(0)=0$ and $F'(0)$
exist and be irreducible. Then $F(w)$ does not have a
nontrivial fixed point on the boundary of $\R^n_+$. Moreover,
$F(x)$ has a positive fixed point iff $\rho(F('0))>1$. If there
is a fixed point, then it is unique.

\end{lemma}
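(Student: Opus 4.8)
The plan is to treat the lemma as a purely order-theoretic statement about the monotone map $F$ on the cone $\R^n_+$, splitting it into the three assertions it contains: (i) $F$ has no nontrivial fixed point on $\partial\R^n_+$; (ii) a positive fixed point exists precisely when $\rho(F'(0))>1$; and (iii) when a positive fixed point exists it is unique. The single observation that drives almost everything is that, for a fixed $x\geq0$, the map $\phi(\lambda)=\lambda^{-1}F(\lambda x)$ is non-increasing on $(0,1]$: for $0<\lambda_1<\lambda_2$, writing $\mu=\lambda_1/\lambda_2\in(0,1)$ and applying sublinearity at the point $\lambda_2 x$ gives $F(\lambda_1 x)\geq\mu F(\lambda_2 x)$, hence $\phi(\lambda_1)\geq\phi(\lambda_2)$, with strict monotonicity under strict sublinearity. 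Since $F(0)=0$ and $F'(0)$ exists, $\phi(0^+)=F'(0)x$ and $\phi(1)=F(x)$, so
\[
F(x)\leq F'(0)\,x\qquad\text{for every }x\geq0,
\]
with at least one strict component when $x\gg0$. This inequality is the bridge between $F$ and its linearization.

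For (i) I would argue by supports. Let $x^*\neq0$ be a nonnegative fixed point and set $T=\{i:x^*_i>0\}$. If $j\in T$ and $(F'(0))_{ij}>0$, then $F_i(te_j)>0$ for small $t>0$ because $(F'(0))_{ij}=\lim_{t\to0^+}t^{-1}F_i(te_j)$; since $x^*\geq te_j$ for $t\leq x^*_j$ and $F$ is monotone, $x^*_i=F_i(x^*)\geq F_i(te_j)>0$, so $i\in T$. Thus $T$ is nonempty and closed under following the directed edges of $\Gamma(F'(0))$, and irreducibility (strong connectivity) forces $T=\{1,\dots,n\}$, i.e. $x^*\gg0$; no fixed point can sit on the boundary. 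For the existence half of (ii) I would run a monotone iteration. Let $v\gg0$ be the Perron eigenvector of the irreducible matrix $F'(0)$ for $\rho=\rho(F'(0))>1$; the expansion $F(\varepsilon v)=\varepsilon\rho v+o(\varepsilon)\gg\varepsilon v$ shows that $w_0=\varepsilon v$ is a strict subsolution for $\varepsilon$ small. Monotonicity makes $w_{k+1}=F(w_k)$ nondecreasing, boundedness of $F$ keeps it bounded above, and continuity identifies the limit $w^*\geq w_0\gg0$ as a positive fixed point.

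The converse half of (ii) and the uniqueness (iii) both feed on the displayed inequality. If $x^*$ is a positive fixed point then $x^*\gg0$ by (i), and strict sublinearity sharpens the inequality to $F'(0)x^*\geq x^*$ with a strict component; the Perron--Frobenius subinvariance theorem for the irreducible $F'(0)$ then gives $\rho(F'(0))>1$, which is the ``only if''. For uniqueness, suppose $x,y\gg0$ are both fixed, put $r=\min_i y_i/x_i=\sup\{s:y\geq sx\}$, and let $k$ attain the minimum, so $y_k=rx_k$. If $r<1$, monotonicity and strict sublinearity yield $y_k=F_k(y)\geq F_k(rx)>rF_k(x)=rx_k=y_k$, a contradiction; hence $r\geq1$, i.e. $y\geq x$, and the symmetric comparison gives $x\geq y$, whence $x=y$.

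The step needing the most care --- and the main obstacle --- is extracting the correct strict inequalities from ``strict sublinearity''. For (ii) it is enough that $F'(0)x^*\geq x^*$ be strict in one component, which the strict monotonicity of $\phi$ supplies; but the uniqueness argument needs the strict inequality $F_k(rx)>rF_k(x)$ precisely in the active coordinate $k$, so one must read strict sublinearity as componentwise strict for arguments $\gg0$ (or bootstrap a single strict component to all of them using the irreducibility of $F'(0)$). Once the meaning of strict sublinearity and the comparison $F(x)\leq F'(0)x$ are nailed down, the remaining ingredients --- monotone convergence for existence, the support argument and strong connectivity for (i), and the min-ratio comparison for uniqueness --- are short, with continuity, boundedness, and $F(0)=0$ entering only to close the iteration and to evaluate $\phi(0^+)$ and $\phi(1)$.
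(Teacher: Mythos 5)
The paper contains no proof of this lemma for you to be compared against: it is quoted verbatim as Theorem~2.1 of Hethcote and Thieme \cite{MR87c:92046}, and the burden of proof is delegated entirely to that reference. Judged on its own, your argument is correct and is essentially the classical one from the cited source: the comparison $F(x)\leq F'(0)x$ extracted from the monotonicity of $\lambda\mapsto\lambda^{-1}F(\lambda x)$; the support argument plus strong connectivity of $\Gamma(F'(0))$ to exclude boundary fixed points; the monotone iteration from the subsolution $\varepsilon v$ (Perron vector) together with boundedness and continuity for existence when $\rho(F'(0))>1$; the Perron--Frobenius subinvariance inequality for the converse; and the minimal-ratio comparison for uniqueness, which is precisely Krasnosel{$'$}ski{\u\i}'s trick \cite{MR0181881} that the paper itself reuses, in vectorized form, in the local stability part of Theorem~\ref{thm:ee:exstab}. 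The one point you flagged as delicate resolves in your favor: the intended definition of strong sublinearity is the componentwise-strict one, $F(\lambda w)\gg\lambda F(w)$ for $w\gg0$ and $0<\lambda<1$ (as in Hirsch and Smith \cite{MR2182759}; note the displayed definition in Section~3 of the paper transposes the two sides, a typo, since the verification that follows, via $\mu_h+\lambda Ly\ll\mu_h+\gamma_h\circ\ldots$-type shrinking denominators, only makes sense in the standard direction). With this reading, your step $F_k(rx)>rF_k(x)$ in the active coordinate is immediate and no bootstrapping through irreducibility is needed; likewise your one-sided comparison sharpens to $F'(0)x^*\gg x^*$, so $\rho(F'(0))>1$ follows at once from a left Perron vector.
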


We have to check, for our function $F$ defined on $\R^n_+
\times \R^n_+$, the conditions of  Theorem~\ref{thm:thieme}.

It is immediate that $F$ is continuous, bounded and maps the
nonnegative orthant $\R^n_+ \times \R^n_+$ into itself.

 The function $F$ is monotone since the Jacobian of $F$ is

\[JF(x,y)=
\begin{bmatrix}
0 & A_1 \\
A_2 &0
\end{bmatrix}\]

With

 \[ A_1= \diag(\mu_h+\gamma_h)^{-1}\diag(\mu_h + \, L\, y  )^{-1} \, \diag(\Lambda_h) \left [I_n- \diag(\mu_h \ + \, L\, y  )^{-1} \,  \diag(L \, y)\right ] \,L.\]

 and

 \[ A_2= \diag( \bar N_v )\, \diag(\mu_v  + \, M\, x  )^{-1} \, \left [I_n - \diag(\mu_v + \, M\, x  )^{-1} \, \diag(M\,x) \right ] \, M.   \]

 \n
 Then $JF(x,y)$ is a Metzler matrix, i.e. a matrix whose off diagonal terms are nonnegative \cite{MR94c:34067,0458.93001}. These matrices are also known as  quasi-positive matrix \cite{0821.34003,MR1993355}. This proves that $F$ is monotone \cite{0821.34003,MR2182759}.
Now, we have  to check the strict sublinearity. We use the
equivalent definition of \cite{MR2182759}, using the standard
ordering of $\R^n$ and the  classical notations  $x \leq y $
if, for any index $i$, $x_i \leq y_i$; $x <y$ if $x\leq y$ and
$x \neq y$  ;   $x \ll y$ if  $x_i <y_i$ for any index $i$ ;

\n
 $F$ is strongly sublinear if

\[0<\lambda <1,  \;\; w \gg 0 \Longrightarrow \; \lambda \, F(w) \gg F(\lambda \, w).\]

\n With $x \gg 0$ and $y \gg 0$, since $\M$ is irreducible, we must have $\M\begin{pmatrix}
x\\y \end{pmatrix} \gg0$, and hence  we have $L \,y \gg 0$ and
$M\, x \gg 0$.  Thus,  $\mu_h  + \,\lambda \,  L\, y
\ll \mu_h + \, L\, y$ and  a similar inequality
$\mu_v  + \, \lambda M\, x \ll \mu_v 
+ M\, x $. This proves the strict sublinearity. Using the
formula for the Jacobian of $F$, we have

\[ JF(0
,0) =
\begin{bmatrix}
 0 &  \diag(\mu_h+\gamma_h)^{-1}\diag(\bar{N}_h)\, L \\
 \\
\diag(\mu_v)^{-1}\, \diag( \bar N_v ) \, M & 0
\end{bmatrix}\]

\n This matrix is irreducible, since $\M$ is irreducible,  and  $\rho(JF(0,0))=\mathcal
R_0$. All the requirements of Theorem~\ref{thm:thieme} are satisfied.
This proves that there exists  a unique positive endemic
equilibrium in $\R^n_+$ when $\mathcal R_0 >1$. Moreover,
looking at the expression of $F$, it is clear that this
equilibrium is in the compact $\Omega$.

\m \n We will prove the asymptotic stability of this positive
equilibrium. The proof is adapted from~\cite{MR87c:92046}, using Krasnosel{$'$}ski{\u\i}'s trick~\cite{MR0181881}. The difference is that we have to
vectorize this proof for the infective of  human host and
vectors.
We will show that the  linearized equation has no solution of
the form $X(t)=\exp(z\, t) \, X_0$ with $X_0 \in \C^{3n}$, $z
\in \C$, $\Re z \geq 0$ for  $X_0$  eigenvector and $z $
corresponding eigenvalue of the Jacobian computed at the
endemic equilibrium. Let $X_0=(U,V,W) \in \C^{3n}$ be such an
eigenvector  for the eigenvalue $z$ . Then

\begin{subequations}
 \begin{align}
 z \, U&= -\diag(\mu_h) \,U - \diag(L\,I_v^*) \, U - \diag(S_h^*) \, L \, W
\label{eigen1}\\
   z\,V &=  \diag(L\, I_v^*  ) \,U -  (\mu_h+\gamma_h) \, V + \diag(S_h^*) \, L \, W  \label{eigen2}\\
   z\,W&= \diag(\bar N_v-I_v^*) \, M\, V -\mu_V \, W - \diag(M\,I_h^*) \, W  \label{eigen3}
\end{align}
\end{subequations}

\n Adding the sub-equations (\ref{eigen1})  and(\ref{eigen2})
we obtain the relation

\[\diag(\mu_h+z\one) \, U= -\diag(\mu_h+\gamma_h +z\one) \, V \]

\n Replacing $U$ in (\ref{eigen2})  and (\ref{eigen3}) yields
after some rearrangements

\begin{multline} \label{krasno}
\begin{bmatrix}
\diag\left (\one + z\diag(\mu_h+\gamma_h)^{-1}\one + \diag(z\one+\mu_h+\gamma_h)\diag(z\one+\mu_h)^{-1}\diag(\mu_h+\gamma_h)^{-1}\, L\,I_v^* \right ) \, V \\
 \\
\diag\left (\one+  z\diag(\mu_v)^{-1}\one+
\diag(\mu_v)^{-1}  M\,I_h^*\right ) \, W
\end{bmatrix} = \\
\\
 \begin{bmatrix}
 0  & \diag(\mu_h+\gamma_h)^{-1}\diag(S_h^*) \, L \\
 \\
 \diag(\mu_v)^{-1} \diag(\bar N_v-I_v^*) \, M& 0
\end{bmatrix} \,
\begin{bmatrix}
 V\\
 \\
 W
\end{bmatrix}
\end{multline}

 The matrix

\[ H =  \begin{bmatrix}
 0  & \diag(\mu_h+\gamma_h)^{-1}\diag(S_h^*) \, L \\
 \\
 \diag(\mu_v)^{-1}\diag(\bar N_v-I_v^*) \, M& 0
\end{bmatrix} \]
is a nonnegative irreducible matrix, since its associated graph is isomorphic to $\Gamma(\M)$.  From equations
(\ref{subeqnendem2}) and (\ref{subeqnendem3}),  we have that

\[ H \, \begin{bmatrix}
 I_h^*\\
 I_v^*
\end{bmatrix}  = \begin{bmatrix}
 I_h^*\\
 I_v^*
\end{bmatrix}.\]

\n Note that $\begin{bmatrix}
 I_h^*\\
 I_v^*
\end{bmatrix}$ is the positive Perron-Frobenius vector of $H$.

\n We assume that $\Re z \geq 0$. Let $\eta (z)$ be the
minimum of the real part of the  components of the two vectors

\[ z\diag(\mu_h+\gamma_h)^{-1}\one+ \diag(z\one+\mu_h+\gamma_h)\diag(z\one+\mu_h)^{-1}\diag(\mu_h+\gamma_h)^{-1} L\,I_v^*\]
and
\[ z\diag(\mu_v)^{-1}\one+\diag(\mu_v)^{-1}M\,I_h^* \]
Since $\Re z \geq 0$, $I_v^* \gg 0$, $I_h^* \gg 0$,  the irreducibility of $\M$ implies that  we have $\eta(z)>0$.
Taking the absolute values in (\ref{krasno}) gives
\[[ 1+\eta(z) ] \,
\begin{bmatrix}
 | V | \\
 | W|
\end{bmatrix} \leq H \,   \begin{bmatrix}
 | V | \\
 | W|
\end{bmatrix} \]
Let $r$ the minimum number such that
 \[  \begin{bmatrix}
 | V | \\
 | W|
\end{bmatrix} \leq r \, \begin{bmatrix}
 I_h^*\\
 I_v^*
\end{bmatrix}. \]
We now have
\[[ 1+\eta(z) ] \,
\begin{bmatrix}
 | V | \\
 | W|
\end{bmatrix} \leq H \,   \begin{bmatrix}
 | V | \\
 | W|
\end{bmatrix}  \leq  r \, H \, \begin{bmatrix}
 I_h^*\\
 I_v^*
\end{bmatrix}=  r \, \begin{bmatrix}
 I_h^*\\
 I_v^*
\end{bmatrix}.\]
Since $\eta(z) >0$ if $\Re z\geq 0$, we obtain a
contradiction to the minimality of $r$. Thus $\Re z <0$, which
proves the asymptotic stability at the endemic equilibrium.
\end{proof}

\section{Global Dynamics}

\label{sec:global}

In this section, we discuss a number of  results concerning the global dynamics of system~\eqref{npatch4}. We begin by introducing some notation to allow an easier handling of the vector calculations.

\begin{definition}
The entry-wise product for vectors,  the Hadamard product,  will be denoted by $\circ$. Namely, if $(X_1,\ldots,X_n), (Y_1,\ldots,Y_n)\in\R^n$, then
\[
(X_1,\ldots,X_n)\circ(Y_1,\ldots,Y_n)=(X_1Y_1,\ldots,X_nY_n).
\]
 For a vector $\bdX=(X_1,\ldots,X_n)\in\R^n$ and for $f:I\subset\R\to\R$, we shall write
 \[
 f(\bdX)=(f(X_1),\ldots,f(X_n)).
 \]
  In particular, if $X=(X_1,\ldots,X_n)\gg0$, then $X^{-1}=(X_1^{-1},\ldots,X_n^{-1})$.
\end{definition}

 We collect some useful facts about the manipulation of expression involving Hadamard products in the following Lemma:

\begin{lemma}
If $\bdX_1,\ldots,\bdX_m\in\R^n$ and $M\in M_n(\R)$ then we have
\begin{enumerate}
\item $\bdX_1+\cdots+\bdX_m\geq m\sqrt[m]{\bdX_1\hp\dots\hp \bdX_m};$\\
\item $\bdX_1\circ(M\bdX_2)=\diag(\bdX_1)M\bdX_2=\diag(M\bdX_2)\bdX_1;$
\item if $\bdX_1=\bdX_1(t)$, and if $f$ is differentiable then $\dfrac{\rd}{\rd t}f(\bdX_1)=\dot{\bdX}_1\hp f'(\bdX_1)$.
\end{enumerate}
\end{lemma}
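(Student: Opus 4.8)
The plan is to reduce each of the three identities to a statement about individual coordinates and then invoke a standard scalar fact. Throughout, I would work index by index, writing $(\bdX_j)_i$ for the $i$-th entry of $\bdX_j$; every claim is an equality or inequality of vectors in $\R^n$, hence holds if and only if it holds in each coordinate. This reduces the lemma to three elementary scalar verifications.

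For part (1), fixing an index $i$, the claim becomes the scalar inequality $(\bdX_1)_i+\cdots+(\bdX_m)_i \geq m\,\bigl((\bdX_1)_i\cdots(\bdX_m)_i\bigr)^{1/m}$, which is precisely the arithmetic--geometric mean inequality for the $m$ nonnegative reals $(\bdX_1)_i,\ldots,(\bdX_m)_i$. Here I would note that, as in the applications that follow, the vectors are tacitly assumed to have nonnegative entries, so that the $m$-th root---understood entrywise through the convention $f(\bdX)=(f((\bdX)_1),\ldots,f((\bdX)_n))$---is well defined. Since AM--GM holds in each coordinate, the vector inequality follows.

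For part (2), I would simply unwind the definitions. Setting $b=M\bdX_2\in\R^n$, the $i$-th entry of $\bdX_1\hp b$ is $(\bdX_1)_i\,b_i$ by definition of the Hadamard product. On the other hand, $(\diag(\bdX_1)\,b)_i=(\bdX_1)_i\,b_i$ and $(\diag(b)\,\bdX_1)_i=b_i\,(\bdX_1)_i$, since multiplying a vector by a diagonal matrix scales its $i$-th entry by the $i$-th diagonal entry. As these three scalars coincide for every $i$, the three vectors are equal; this is really just the elementary observation that $a\hp b=\diag(a)\,b=\diag(b)\,a$ for any $a,b\in\R^n$.

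For part (3), I would differentiate entrywise and apply the chain rule. By the convention above, the $i$-th component of $f(\bdX_1(t))$ is $f\bigl((\bdX_1(t))_i\bigr)$, whose derivative is $f'\bigl((\bdX_1)_i\bigr)\,(\dot{\bdX}_1)_i$; this is exactly the $i$-th component of $\dot{\bdX}_1\hp f'(\bdX_1)$, again using that $f'$ acts entrywise. None of the three parts presents a genuine obstacle: the only nontrivial input is the scalar AM--GM inequality used in part (1), while the remaining two are bookkeeping with the Hadamard and diagonal-matrix notation; the value of the lemma is organizational, packaging these facts for repeated use in the Lyapunov computations to come.
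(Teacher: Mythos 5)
Your proof is correct, and it matches what the paper implicitly intends: the paper states this lemma without any proof, treating all three parts as elementary coordinatewise facts, which is exactly how you verify them (AM--GM per entry, unwinding the $\diag$ notation, and the scalar chain rule). Your remark that part (1) tacitly requires nonnegative entries---so that the entrywise $m$-th root makes sense and AM--GM applies---is a point the paper glosses over, and it is the right reading since the lemma is only ever applied to nonnegative quantities.
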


It turns out that it is more convenient to work with system~ \eqref{npatch4} in prevalence form,  so that the susceptible population at the  disease-free equilibrium (DFE) , for both host and vector populations in each group, is  unity. Let
\begin{align*}
&D_h=\diag(\bar N_h),\; D_v=\diag(\bar N_v),\\[2mm]
&(X,Y)=D_h^{-1}(S_h, I_h),\quad Z=D_v^{-1}I_v\\[2mm]
&A=LD_v\text{ and }B=MD_h.
\end{align*}
 introduce 

In this case system~\eqref{npatch4} reads
\begin{equation}
\label{eqn:sys_res}
\left\{
\begin{array}{ccc}
\dot{X}&=&\mu_h\circ(\one-X)-\diag(X)AZ\\[2mm]
\dot{Y}&=&\diag(X)AZ-(\mu_h+\gamma_h)\circ Y\\[2mm]
\dot{Z}&=&\diag(\one-Z)BY-\mu_v\circ Z
\end{array}
\right.
\end{equation}
I suggest to use $X^*, Y^*, \NGO^*,...$ instead of $\bar X, \bar Y, \bNGO, ..$ for all what is related to the endemic equilibrium since bar has already been used and we used stars for the EE in the previous section

With this notation, the DFE is $(\one,0,0)$ and  we shall write the EE as $(\bX,\bY,\bZ)$, with
\[ 
\bX_i=\left(\frac{X^*_i}{\bar{N}_{h,i}}\right),\quad  \bY_i=\left(\dfrac{Y^*_i}{\bar{N}_{h,i}}\right)\quad \textbf{and}\quad \bZ_i=\left(\dfrac{Z^*_i}{\bar{N}_{v,i}}\right).
\]
Notice that, since in the new coordinates we have $\diag(\bar N_h)=\diag(\bar N_v)=\diag(\one)$, the next generation operator is now given by
\[
\NGO=\begin{pmatrix}
0&\diag(\mu_h+\gamma_h)^{-1}A\\
\diag(\mu_v)^{-1}B&0
\end{pmatrix}.
\]
Also, the absorbing set can now be written as
\[
K=\left\{(X,Y,Z)\in\R^{3n} \text{ s.t. } 0\leq X+Y\leq \one,\quad 0\leq Z\leq\one \right\}.
\]

We begin with the stability of the DFE when $\RR_0\leq1$:

\begin{theorem}
\label{thm:gs:dfe}
 Assume that hypothesis  \ref{hyp:1} holds and that $\mathcal R_0\leq 1$. Then the DFE is globally asymptotically stable.  If $\RR_0>1$, then the DFE is unstable.
\end{theorem}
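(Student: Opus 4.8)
The plan is to prove the two assertions of Theorem~\ref{thm:gs:dfe} separately. For the global stability of the DFE when $\RR_0\leq1$, I would use a Lyapunov function built from the left Perron-Frobenius eigenvector of the next generation operator $\NGO$, which is the standard matrix-theoretic device for such threshold results. For the instability claim when $\RR_0>1$, I would argue by linearization at the DFE and show that the Jacobian has an eigenvalue with positive real part.

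\medskip

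For the stability part, first I would note that on the invariant set $K$ the dynamics of $(Y,Z)$ decouples favourably: dropping the $\dot X$ equation and using $X\leq \one$ together with $\one - Z\leq\one$ on $K$, one gets the differential inequality
\[
\begin{pmatrix}\dot Y\\ \dot Z\end{pmatrix}
\leq
\begin{pmatrix}
-\diag(\mu_h+\gamma_h) & A\\
B & -\diag(\mu_v)
\end{pmatrix}
\begin{pmatrix}Y\\ Z\end{pmatrix},
\]
since $\diag(X)AZ\leq AZ$ and $\diag(\one-Z)BY\leq BY$ componentwise for nonnegative arguments. Call the matrix on the right $J$; it is a Metzler matrix, and a short computation shows $s(J)\leq 0 \iff \rho(\NGO)\leq 1$ by the standard equivalence between the stability modulus of $J$ and the spectral radius of the associated next generation operator (this is exactly the relationship behind the definition $\RR_0=\rho(\NGO)$ via \cite{VddWat02}). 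Since Hypothesis~\ref{hyp:1} makes $\NGO$, and hence $J$, irreducible, there is a strictly positive left eigenvector $c^t$ of $J$ associated to $s(J)$. I would then take the linear Lyapunov function
\[
V(Y,Z)=c^t\begin{pmatrix}Y\\ Z\end{pmatrix},
\]
which is nonnegative on $K$, compute $\dot V\leq c^t J (Y,Z)^t = s(J)\, V\leq 0$, and conclude via LaSalle that trajectories approach the largest invariant set where $\dot V=0$. The case $\RR_0<1$ is immediate since then $s(J)<0$ forces $(Y,Z)\to0$; the delicate case $\RR_0=1$ requires examining the invariant set $\{\dot V=0\}$ and using irreducibility to show it forces $Y=Z=0$, after which the $\dot X$ equation drives $X\to\one$, recovering the DFE.

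\medskip

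For the instability when $\RR_0>1$, I would compute the Jacobian of~\eqref{eqn:sys_res} at the DFE $(\one,0,0)$. The linearization in the $(Y,Z)$ variables is precisely the Metzler matrix $J$ above (the $X$-equation contributes only stable modes decoupled from the infective block at the DFE, since $AZ$ vanishes to first order when evaluated at $Z=0$). Because $J$ is irreducible Metzler, $s(J)$ is a real eigenvalue (its Perron-Frobenius root), and $\rho(\NGO)>1$ gives $s(J)>0$; hence the DFE has a one-dimensional unstable manifold and is unstable.

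\medskip

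The main obstacle I expect is the boundary case $\RR_0=1$: here $s(J)=0$ and $\dot V\leq0$ only gives convergence to the set where $\dot V=0$, so the linear Lyapunov function alone does not immediately yield attractivity of the DFE. The work will be in characterizing the largest invariant subset of $K\cap\{\dot V=0\}$. I would argue that on this set the inequalities used to bound $\dot V$ must be equalities, which—because $c^t\gg0$ and the off-diagonal structure of $J$ is irreducible—forces $\diag(X)AZ=AZ$ and $\diag(\one-Z)BY=BY$ to hold with $Z$ or $Y$ supported only where $X=\one$ and $Z=0$; chasing this through the irreducible coupling between the $Y$ and $Z$ components should collapse the invariant set to $\{Y=Z=0\}$, and then the triangular structure of the full system returns $X\to\one$.
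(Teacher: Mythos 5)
Your proposal is correct, and it rests on the same core device as the paper's proof: a Lyapunov function that is linear in the infective variables $(Y,Z)$, with strictly positive weights supplied by a left Perron--Frobenius eigenvector, combined with LaSalle's invariance principle, and with the delicate case $\RR_0=1$ reduced to identifying the largest invariant subset of $\{\dot V=0\}$ --- which in both arguments is exactly the set where $(\one-X)\circ(AZ)=0$ and $Z\circ(BY)=0$ (the paper's $S_0$). The differences are in execution, and they are worth noting. You take the weights from the left eigenvector of the Metzler block $J$ and run a comparison inequality $\dot V\le s(J)\,V$, which obliges you to invoke the van den Driessche--Watmough equivalence $s(J)\le 0\iff\rho(\NGO)\le 1$ as an auxiliary spectral lemma. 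The paper instead takes the weights $(\alpha,\beta)$ from the left eigenvector of $\NGO$ itself, rescaled by $(\mu_h+\gamma_h)\circ\mu_v^{-1}$ on the $Z$-component, and computes $\dot V$ \emph{exactly}: $\dot V=(\RR_0-1)\bigl[\langle(\mu_h+\gamma_h)\circ\alpha,Y\rangle+\langle(\mu_h+\gamma_h)\circ\beta,Z\rangle\bigr]-R$, where $R\ge 0$ is the explicit nonlinear slack. That exact identity buys two things at once: the threshold factor $(\RR_0-1)$ appears with no extra lemma, and the instability for $\RR_0>1$ falls out of the same formula, since $\dot V>0$ near the DFE off the disease-free face (a Chetaev-type argument); you instead prove instability separately by linearization, using that $s(J)>0$ is a real Perron root of the irreducible Metzler matrix $J$. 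Your linearization route to instability is, if anything, the more standard and airtight one (the paper's ``by continuity'' step is delicate, since $\dot V$ vanishes on $\{Y=Z=0\}$); conversely, the paper's exact computation is cleaner for the stability half. Finally, note that both treatments leave the $\RR_0=1$ invariant-set verification at the level of a sketch, so your acknowledged remaining work there matches the paper's own ``it can then be easily verified''; and your closing step for $\RR_0<1$, where $(Y,Z)\to 0$ and the triangular structure returns $X\to\one$, is precisely what the paper's quoted Theorem~\ref{Vid:theo} (Vidyasagar) or a direct LaSalle argument on the compact set $K$ justifies.
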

\begin{proof}

Since $\NGO$ is irreducible, let $(\alpha, \beta)$  be a left, positive eigenvector  of $\NGO$, associated to the eigenvalue  $\mathcal R_0$.  Let
 \[
 V=\langle \alpha,Y\rangle + \langle \beta,(\mu_h+\gamma_h)\circ \mu_v^{-1}\circ Z\rangle
 \quad\text{and}\quad
 R=\langle\alpha,\diag(\one-X)AZ\rangle + \langle\beta,(\mu_h+\gamma_h)\circ\mu_v^{-1}\circ\,\diag(Z)BY\rangle.
 \]
Notice that $R\geq0$, and that $R$ vanishes in the set
\[ S_0=\{(X,Y,Z) \in K  : \diag(\one-X)AZ=\diag(Z)BY=0\,,\, Y,Z\not=0\}.
\] 
Computing the derivative of $V$ along the flow, we have:
\begin{align*}
\dot{V}&=\langle \alpha,\dot{Y}\rangle + \langle \beta, (\mu_h+\gamma_h)\circ\mu_v^{-1}\circ\dot{Z}\rangle\\
&= \langle \alpha,\diag{(X)}AZ-\left(\mu_h+\gamma_h\right)\circ Y\rangle + \langle \beta, (\mu_h+\gamma_h)\circ\mu_v^{-1}\circ\left(\diag(\one-Z)BY-\mu_v\circ Z\right)\rangle\\
&= \langle \alpha,AZ-\left(\mu_h+\gamma_h\right)\circ Y\rangle + \langle \beta,(\mu_h+\gamma_h)\circ\mu_v^{-1}\circ\left( BY-\mu_vZ\right)\rangle -R\\
&=\left[\mathcal R_0\langle(\mu_h+\gamma_h)\circ\beta,Z\rangle - \langle(\mu_h+\gamma_h)\circ\alpha,Y\rangle +\mathcal R_0\langle(\mu_h+\gamma_h)\circ\alpha,Y\rangle - \langle(\mu_h+\gamma_h)\circ\beta,Z\rangle\right] -R\\
&=\left(\mathcal R_0-1\right)\left[\langle(\mu_h+\gamma_h)\circ\alpha,Y\rangle +\langle(\mu_h+\gamma_h)\circ\beta,Z\rangle\right]-R\\
&\leq 0,
\end{align*} 
 provided that $\mathcal R_0\leq 1$.

Also, notice that when $\mathcal R_0<1$, we have that $\dot{V}=0$ if,
and only if, $Y=Z=0$. Since the DFE is the unique invariant
compact set in this latter case, LaSalle principle implies
that it is globally asymptotically stable.  If $\mathcal R_0=1$ then we observe that $\dot{V}=0$  holds in  $S_0$, which contains the set $\{(X,Y,Z) | Y=Z=0\}$. Nevertheless, it can then be easily verified from system~\eqref{eqn:sys_res} that the DFE is the only invariant set contained in $S_0$. Thus the result follows once again from LaSalle invariance principle.

If $\RR_0>1$, then if  both $Y$ and $Z$ are sufficient close to zero, we have $\dot{V}(\one,Y,Z)>0$. By continuity, this is also true in a neighbourhood of $(\one,0,0)$, and hence the DFE is unstable.

\end{proof}

Before we can tackle the global stability of the endemic equilibrium, when $\mathcal R_0>1$, we need  some preliminary results. 

\begin{lemma}
\label{lem:tt:pev}
Assume that Hypothesis~\ref{hyp:1} holds, and let
\[
\bNGO=
\begin{pmatrix}
0&\diag(\mu_h+\gamma_h)^{-1}\,\diag(\bX)A\\
\diag(\mu_v)^{-1}\,\diag(\one-\bZ)B&0
\end{pmatrix}.
\]
Then, $\bNGO$ is irreducible,  $\rho(\bNGO)=1$ and $\bNGO$ has a  positive left eigenvector $(\xi,\eta)^t$ to $\rho(\bNGO)$. In addition, let
\[
T=\diag(\mu_v)^{-1}\diag(\mu_h+\gamma_h)^{-1}\diag(\bX)A\diag(\one-\bZ)B.
\]
Then $\rho(T)=1$, and  $T^t\eta=\eta$.
\end{lemma}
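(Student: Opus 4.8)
The plan is to establish the properties of $\bNGO$ by applying Proposition~\ref{prop:factors_irred} and then deduce the spectral facts from the defining equations of the endemic equilibrium. First I would verify irreducibility: the matrix $\bNGO$ has the same block structure as $\M$, with $L$ replaced by $\diag(\mu_h+\gamma_h)^{-1}\diag(\bX)A$ and $M$ replaced by $\diag(\mu_v)^{-1}\diag(\one-\bZ)B$. Since $\bX\gg0$ and $\one-\bZ\gg0$ at the endemic equilibrium (the EE lies in the interior of $K$, as established in Theorem~\ref{thm:ee:exstab}), and since the diagonal factors $\diag(\mu_h+\gamma_h)^{-1}$, $\diag(\mu_v)^{-1}$ are positive diagonal matrices, multiplication by these diagonal matrices preserves the zero/non-zero pattern of $A$ and $B$. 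Hence $\Gamma(\bNGO)$ is isomorphic to $\Gamma(\M)$ (exactly the argument already used for the matrix $H$ in the proof of Theorem~\ref{thm:ee:exstab}), so $\bNGO$ is irreducible under Hypothesis~\ref{hyp:1}.

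Next I would identify the Perron eigenvalue and a Perron eigenvector. The key observation is that the endemic equilibrium relations can be rewritten so that $(\bX\circ\bY,\bZ)$ or a suitable positive vector is a right Perron eigenvector of $\bNGO$ with eigenvalue $1$. Concretely, rewriting the equilibrium equations \eqref{subeqnendem2} and \eqref{subeqnendem3} in the prevalence coordinates $(X,Y,Z)$, one obtains that $\diag(\mu_h+\gamma_h)^{-1}\diag(\bX)A\,\bZ=\bY$ and $\diag(\mu_v)^{-1}\diag(\one-\bZ)B\,\bY=\bZ$, so that
\[
\bNGO\begin{pmatrix}\bY\\\bZ\end{pmatrix}=\begin{pmatrix}\bY\\\bZ\end{pmatrix}.
\]
Since $(\bY,\bZ)^t$ is a positive vector and $\bNGO$ is nonnegative and irreducible, the Perron--Frobenius theorem forces this positive eigenvector to be the Perron vector and its eigenvalue to be the spectral radius; therefore $\rho(\bNGO)=1$. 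Irreducibility then guarantees the existence of a positive left eigenvector $(\xi,\eta)^t$ associated with $\rho(\bNGO)=1$.

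Finally, for the statement about $T$, I would invoke the block computation from Proposition~\ref{prop:factors_irred}: with $L$ and $M$ replaced by the two diagonal-scaled blocks above, the product $T$ is exactly $ML$ in that notation, so $\rho(\bNGO)^2=\rho(T)$, giving $\rho(T)=1$. For the left-eigenvector claim $T^t\eta=\eta$, I would compute the left-eigenvector relation block-wise: writing out $(\xi,\eta)^t\bNGO=(\xi,\eta)^t$ gives $\eta^t\diag(\mu_v)^{-1}\diag(\one-\bZ)B=\xi^t$ and $\xi^t\diag(\mu_h+\gamma_h)^{-1}\diag(\bX)A=\eta^t$, and substituting the first into the second yields $\eta^t\diag(\mu_v)^{-1}\diag(\one-\bZ)B\diag(\mu_h+\gamma_h)^{-1}\diag(\bX)A=\eta^t$. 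Up to the ordering of the diagonal factors this is precisely $\eta^t T=\eta^t$, i.e. $T^t\eta=\eta$. The only subtlety worth checking carefully is the commutation of the scalar diagonal matrices $\diag(\mu_v)^{-1}$ and $\diag(\mu_h+\gamma_h)^{-1}$ so that the composed product matches the stated form of $T$ exactly; since these are diagonal they commute freely, so this is routine. The main obstacle, such as it is, is not any single hard step but rather keeping the prevalence-coordinate bookkeeping consistent between the $(S_h,I_h,I_v)$ equilibrium relations and the rescaled $(\bX,\bY,\bZ)$ variables so that the eigenvector identities come out cleanly.
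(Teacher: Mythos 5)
Your overall route is the same as the paper's: irreducibility of $\bNGO$ because positive diagonal scalings preserve the zero pattern of the blocks (so $\Gamma(\bNGO)$ is isomorphic to $\Gamma(\M)$), then $\rho(\bNGO)=1$ because the equilibrium relations exhibit the positive vector $(\bY,\bZ)^t$ as an eigenvector and Perron--Frobenius forces a positive eigenvector of a nonnegative irreducible matrix to be the Perron vector, and finally the claims about $T$ via the block-product structure of Proposition~\ref{prop:factors_irred}. Those first parts of your argument are correct and match the paper, which is terser but identical in substance.

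The gap is in your last step. Writing $L'=\diag(\mu_h+\gamma_h)^{-1}\diag(\bX)A$ and $M'=\diag(\mu_v)^{-1}\diag(\one-\bZ)B$, your block computation correctly yields $\eta^t M'L'=\eta^t$, i.e.\ $\eta$ is a left eigenvector of
\[
M'L'=\diag(\mu_v)^{-1}\diag(\one-\bZ)B\,\diag(\mu_h+\gamma_h)^{-1}\diag(\bX)A .
\]
But this matrix is \emph{not} the stated $T$ ``up to the ordering of the diagonal factors'': in $M'L'$ the factor $B$ stands to the left of $A$, while in $T$ it stands to the right, and $\diag(\mu_v)^{-1}$ sits in a different position relative to $A$. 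To identify the two you would have to commute diagonal matrices past the full matrices $A$ and $B$, and cyclically permute the product; neither operation is legitimate. The only things that commute freely are the two scalar diagonals with each other, and that is not where the discrepancy lies. Note also that although $L'M'$ and $M'L'$ share the same spectral radius, they do not share left eigenvectors (if $w^tM'L'=w^t$ then the left eigenvector of $L'M'$ is $w^tM'$, not $w^t$), so the cyclic rearrangement cannot be waved off either. What is actually true---and what the paper needs and uses afterwards ($T\bY=\bY$ in the following Proposition, and the identities $A^t\diag(\bX)\eta=\mu_v\circ\bxi$, $B^t\diag(\one-\bZ)\bxi=(\mu_h+\gamma_h)\circ\eta$ in the proof of Theorem~\ref{thm:gs:ee})---is the statement for the product of the two blocks of $\bNGO$,
\[
L'M'=\diag(\mu_h+\gamma_h)^{-1}\diag(\bX)A\,\diag(\mu_v)^{-1}\diag(\one-\bZ)B ,
\]
whose right Perron vector is $\bY$ and whose left Perron vector is the \emph{host} component of the left eigenvector of $\bNGO$; in your labelling of $(\xi,\eta)$ (with $\xi$ attached to the host block) the correct conclusion is $(L'M')^t\xi=\xi$, not $T^t\eta=\eta$. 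In short, the printed definition of $T$ contains a misplaced $\diag(\mu_v)^{-1}$: as literally written, neither $\rho(T)=1$ nor $T^t\eta=\eta$ holds in general, and your ``routine commutation'' step is precisely where this misprint gets silently (and invalidly) absorbed, when it should instead be flagged and corrected.
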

\vspace{3mm}
\begin{proof}

Since Hypothesis~\ref{hyp:1} holds, we have that $\NGO$ is irreducible, and hence $\bNGO$ is irreducible. From the equilibrium relationship we also have
\[
\bNGO\begin{pmatrix}
\bY\\\bZ
\end{pmatrix}
=
\begin{pmatrix}
\bY\\\bZ
\end{pmatrix},
\]
and hence we have
\[
\rho(\bNGO)=1.
\]
The  remaining claims  follow from Proposition~\ref{prop:factors_irred}.
\end{proof}
Before giving the next definition, we introduce some terminology. For a given digraph $G$,  we will denote its  set of vertices by $\V(G)$, and the set of edges of $G$ by $\E(G)\subset \V(G)\times\V(G)$.  A $c$-edge colored multidigraph ($c$-ECM for short) is a multi-digraph where the parallel edges must have different colors---and therefore a maximum of $c$ parallel edges are allowed. If $G$ is a $c$-ECM, we will write $\CC(G)$  for its set of colors. Thus each edge of  $G$ can be uniquely described as an ordered triple $(v_1,v_2,c)\in \E(G)\subset\V(G)\times\V(G)\times\CC(G)$.  
\begin{definition}[Transitive Contact Multigraph]
Given a contact network $\Gamma(\M)$,  we define the \textit{transitive contact multigraph} (TCM for short) $\Gamma(\MM)$  as the $n$-ECM  of order $n$, obtained from $\Gamma(\M)$ by taking $\V(\Gamma(\MM))=\{1,\ldots,n\}$ and defining $(i,j,k)\in\E(\Gamma(\MM))$ if $L_{i,k}M_{k,j}\not=0$.
\end{definition}
\vspace{3mm}
\begin{remark}
Notice that if we collapse all the parallel edges, then we obtain a graph isomorphic to $\Gamma(LM)$. In particular, Proposition~\ref{prop:factors_irred} then says that $\Gamma(\MM)$ is strongly connected. 
\end{remark}
\begin{remark}
If $(i,j,k)\in\MM$, then this means that an infected host in group $j$ can be the origin of an infection of a host in group $i$ by infecting a vector of group $k$, which then infects the  host in group $i$. Within the fast travelling interpretation, this means that a infect host that is resident in region $j$ can travel to region $k$, where it infects a vector there. This infected vector will subsequently infect a susceptible host of region $i$ that travels to region $k$. See Figure~\ref{fig:exm} for an example of a host-vector contact network, and the corresponding transitive contact multigraph. 
\end{remark}
\begin{figure}[htbp]
	\subfloat[Host-vector contact network]{\includegraphics[scale=1.25]{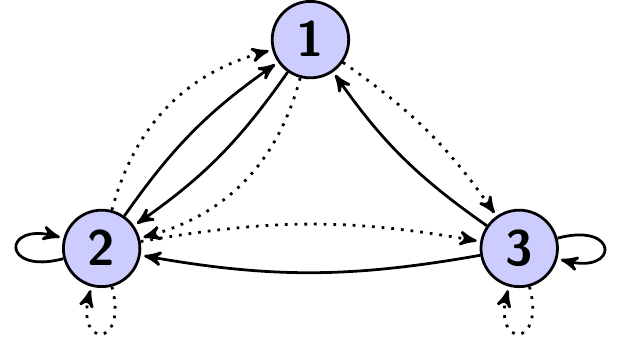}}
	\hfill
	\subfloat[Transitive contact multigraph]{\includegraphics[scale=1.25]{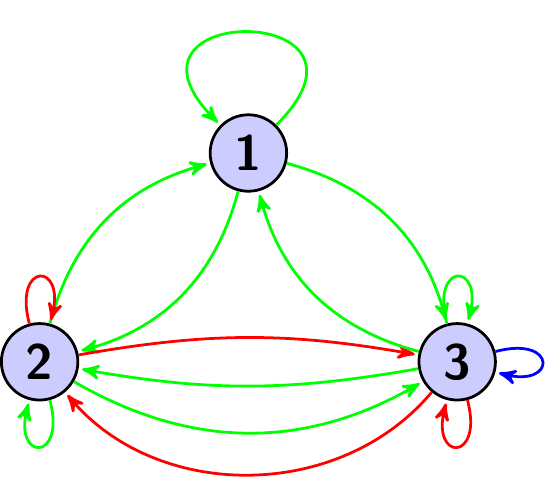}}
	\caption{In (a) we display a host-vector contact network. Within the travelling interpretation of the model, that solid lines indicate the travelling patterns of susceptible hosts (specified by the nonzero entries of $L$), while the dotted lines indicate the travelling pattern of the infected hosts (specified by the nonzero entries of $M$). Notice that, in this example, neither $L$ or $M$ are irreducible, but $\M$ is. In (b) we display the corresponding TCM: the red edges  indicate connections through region 1 (dotted lines in B\&W), the green edges indicate connections through region 2 (dasehd lines in B\&W), and the blue edge indicates a connection trough region 3 (dashed-dotted lines in B\&W).}
	\label{fig:exm}
\end{figure}

We will now give a graph-theoretical interpretation of $\eta$.

\begin{proposition}
Let $\zeta=\diag(\bY)\eta$. Then $\zeta$ spans the kernel of the graph Laplacian of $\MM$. In particular, its entries are given by (a multiple of) the principal minors along the diagonal and, therefore, it is equal to the sum of the weight product of weights of a spanning tree of $\MM$, over all such spanning trees.
\end{proposition}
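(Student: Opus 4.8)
The plan is to reduce the statement to Kirchhoff's matrix--tree theorem, in the combinatorial form used in \cite{gls:2006,gls:2008}. By the remark following the definition of the TCM, collapsing the parallel edges of $\Gamma(\MM)$ produces a graph isomorphic to $\Gamma(LM)$, which is strongly connected by Proposition~\ref{prop:factors_irred}; its weighted adjacency matrix is (the off-diagonal part of) $T$, whose entry $T_{ij}=\sum_k(\ldots)$ collects the weights of the edges $(i,j,k)$. I would form the associated graph Laplacian $\LL$ and recall the standard fact that, for a strongly connected weighted digraph, $\ker\LL$ is one-dimensional and is spanned by the vector of diagonal cofactors of $\LL$---the principal minors along the diagonal---each of which, by the matrix--tree theorem, equals the sum over all spanning trees rooted at the given vertex of the product of the edge weights. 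It therefore suffices to exhibit a positive vector in $\ker\LL$ and to identify it with $\zeta$.

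The construction of that vector is where the two eigen-relations of Lemma~\ref{lem:tt:pev} enter. From the equilibrium identity $\bNGO(\bY,\bZ)^t=(\bY,\bZ)^t$ one reads off that $\bY$ is a positive right eigenvector of $T$ for the eigenvalue $\rho(T)=1$, i.e. $T\bY=\bY$. Hence the diagonally scaled matrix $\widehat T=\diag(\bY)^{-1}T\diag(\bY)$ is row-stochastic, because its $i$-th row sum equals $\bY_i^{-1}(T\bY)_i=1$, so that $I-\widehat T$ is exactly the graph Laplacian (in the out-degree normalisation) of the collapsed TCM, now carrying the weights $\widehat T_{ij}=\bY_i^{-1}T_{ij}\bY_j$. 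The left eigen-relation $T^t\eta=\eta$ then yields $\zeta^t\widehat T=\eta^t\diag(\bY)\diag(\bY)^{-1}T\diag(\bY)=\eta^tT\diag(\bY)=\eta^t\diag(\bY)=\zeta^t$, so that $\zeta=\bY\circ\eta$ is a left null vector of $I-\widehat T$; equivalently, $\zeta$ is the (unnormalised) stationary distribution of the Markov chain $\widehat T$. This is precisely the assertion that $\zeta$ lies in the kernel of the graph Laplacian of $\MM$.

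Since $\zeta=\bY\circ\eta\gg0$ and $\ker\LL$ is one-dimensional, $\zeta$ spans it, hence is a positive multiple of the cofactor vector. Applying the matrix--tree theorem (the combinatorial identity of \cite{gls:2006,gls:2008}) to $\LL$ then gives the entries of $\zeta$ as the diagonal principal minors, each equal to the sum of the products of the edge weights over the spanning trees of $\MM$ rooted at the corresponding vertex, which is the claim.

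The main obstacle is bookkeeping the diagonal normalisation. One must pin down exactly which equilibrium factors ($\diag(\bY)$, and the factors hidden in $T$ such as $\diag(\mu_v)$, $\diag(\mu_h+\gamma_h)$ and $\diag(\one-\bZ)$) are absorbed into the vertex data and which into the edge weights of $\MM$, so that the weights appearing in the spanning-tree products of the conclusion are literally the edge weights of $\MM$, and so that the scaling by $\diag(\bY)$ is arranged to make the kernel come out as $\zeta$ rather than as $\bY$, $\eta$, or $\one$ alone. Getting the left/right eigenvector pairing consistent---pairing the left eigenvector $\eta$ of $T$ with the right eigenvector $\bY$ through the Hadamard product, which is exactly the stationary-distribution formula for $\widehat T$---is the delicate point; once the normalisation is fixed the remaining manipulations are routine, and the positivity of the cofactors follows automatically from strong connectivity.
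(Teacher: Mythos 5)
Your proposal is correct and takes essentially the same route as the paper's own proof: both use the equilibrium relation $T\bY=\bY$ to form the row-stochastic conjugate $\tT=\diag(\bY)^{-1}T\diag(\bY)$, identify $\zeta=\diag(\bY)\eta$ as a (left) null vector of the associated Laplacian $I-\tT^t$ of the collapsed transitive contact multigraph, and conclude via one-dimensionality of that kernel (from strong connectivity, Proposition~\ref{prop:factors_irred}) together with Kirchhoff's matrix--tree theorem. The only difference is cosmetic: you make the Markov-chain/stationary-distribution interpretation and the normalisation bookkeeping explicit, which the paper leaves implicit.
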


\begin{proof}
 From the equilibrium relations, we have
\[
T\bY=\bY
\]
and hence
\[
\tT\cdot\one=\one,\text{ where } \tT=\diag^{-1}(\bY)T\diag(\bY).
\]
Thus, we also have
\[
\tT^t\zeta=\zeta,\quad \zeta=\diag(\bY)\eta.
\]
Notice now that
\begin{equation*}
I-\tT^t=
\begin{pmatrix}
1-\tT_{11}&-\tT_{21}&\cdots&-\tT_{n1}\\
-\tT_{12}&1-\tT_{22}&\cdots&-\tT_{n2}\\
\vdots&\vdots&\ddots&\vdots\\
-\tT_{1n}&-\tT_{2n}&\cdots&1-\tT_{nn}
\end{pmatrix}=
\begin{pmatrix}
\sum_{i\not=1}\tT_{i1}&-\tT_{21}&\cdots&-\tT_{n1}\\
-\tT_{12}&\sum_{i\not=2}\tT_{i2}&\cdots&-\tT_{n2}\\
\vdots&\vdots&\ddots&\vdots\\
-\tT_{1n}&-\tT_{2n}&\cdots&\sum_{i\not=n}\tT_{in}
\end{pmatrix},
\end{equation*}
where we have used that
\[
\sum_{i=1}^n\tT_{i,j}=1,\quad j\in\{1,\ldots,n\}.
\]
Therefore $\zeta$ is in the kernel of the matrix Laplacian of $\tT^t$. 

In addition, we have that
\[
\Gamma(\tT^t)=\Gamma(\tT)=\Gamma(T),
\]
and the latter is isomorphic to $\MM$ when collapsing all the parallel edges, and hence the Laplacian of $\MM$ with its edge directions reversed is  $I-\tT^t$. Furthermore, since $\Gamma(M)$ is strongly connected  we have, by Proposition~\ref{prop:factors_irred}, that $\MM$ is also strongly connected and thus the kernel of the associated Laplacian is one-dimensional \cite{Chung:1996}. 
The other claims follow from Kirchhoff's theorem for multigraphs---cf. \cite{Bollobas:1998}.
\end{proof}
\bigskip

\bigskip

\begin{theorem}\label{thm:gs:ee}
 Assume that Hypothesis~\ref{hyp:1} holds and that  $\mathcal R_0>1$. Then the EE is globally asymptotically stable. 
\end{theorem}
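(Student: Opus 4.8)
The plan is to exhibit a Lyapunov function of Goh--Volterra type, assembled group by group from a vectorial version of the single-patch function of \cite{Souza2014}, and then weighted across groups by the graph-theoretic coefficients produced in the preceding Proposition. Writing $g(u)=u-1-\ln u$, so that $g\geq 0$ with equality only at $u=1$, I would first set up for each node $i$ a host contribution built from $g(X_i/\bX_i)$ and $g(Y_i/\bY_i)$ and a vector contribution built from $g(Z_i/\bZ_i)$, with the internal (within-group) weights fixed by the equilibrium fluxes so that the scalar ($n=1$) computation of \cite{Souza2014} goes through verbatim at each node. The global candidate is then $V=\sum_i \eta_i V_i$, weighted across groups by the eigenvector $\eta$ of Lemma~\ref{lem:tt:pev}; equivalently the host block carries the weights $\zeta=\diag(\bY)\eta$, which by the preceding Proposition are exactly the spanning-tree weights of the transitive contact multigraph $\MM$.

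Next I would differentiate $V$ along \eqref{eqn:sys_res}. Using item (3) of the Hadamard Lemma to differentiate the logarithmic terms and item (2) to move diagonal factors past the matrices $A$ and $B$, together with the three equilibrium identities
\[
\mu_h\circ(\one-\bX)=\diag(\bX)A\bZ,\quad \diag(\bX)A\bZ=(\mu_h+\gamma_h)\circ\bY,\quad \diag(\one-\bZ)B\bY=\mu_v\circ\bZ,
\]
each $\dot V_i$ can be rewritten so that the demographic part contributes a manifestly nonpositive term, while the infection part produces products of the ratios $X_i/\bX_i$, $\bX_iZ_k/(X_i\bZ_k)$, $Y_i/\bY_i$, $Z_k/\bZ_k$, and so on. The key step is to bound these transmission products by the arithmetic--geometric mean inequality of item (1) of the Lemma, which turns each elementary host--vector--host loop into the difference of a ``potential'' evaluated at the two host endpoints of the corresponding edge of $\MM$.

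The crucial cancellation is then graph-theoretic. After the AM--GM step, the surviving cross-terms organise into a sum over the edges $(i,j,k)$ of $\MM$, each weighted by $\zeta_i$ times the transmission weight $\bX_iA_{ik}(1-\bZ_k)B_{kj}\bY_j$ carried by that edge. Because $\zeta$ spans the kernel of the Laplacian of $\MM$ (the preceding Proposition), this weighted sum of edge potentials telescopes to zero: this is the multigraph version of the Guo--Li--Shuai identity used in \cite{gls:2006}, justified by Kirchhoff's theorem for multigraphs exactly as invoked in the Proposition. Hence all the coupling terms cancel and one is left with $\dot V\leq 0$, where equality in the AM--GM bounds forces every transmission ratio to equal one and, together with the demographic terms, forces $X=\bX$, $Y=\bY$, $Z=\bZ$.

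Finally I would conclude by LaSalle's invariance principle on the compact absorbing set $K$: $V$ is positive definite about the EE on the interior of $K$, $\dot V\leq 0$ there, and the largest invariant subset of $\{\dot V=0\}$ reduces to the single point $(\bX,\bY,\bZ)$, which gives global asymptotic stability. The main obstacle I expect is precisely the cancellation step: unlike the directly transmitted SIR case of \cite{gls:2006}, infection here is a two-step host--vector--host process, so the edges carrying the cross-terms live in the multigraph $\MM$ rather than in a simple digraph, and one must verify that the AM--GM bound produces exactly the edge-potential structure annihilated by the Laplacian kernel; the factor $\one-\bZ$ in the vector equation is a secondary complication that has to be tracked so that the vector-block weights align with $\eta$ and the within-group balance of \cite{Souza2014} is preserved.
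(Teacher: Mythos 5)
Your proposal follows essentially the same route as the paper's own proof: the same Goh--Volterra logarithmic Lyapunov function weighted by the positive left eigenvector $(\xi,\eta)$ of Lemma~\ref{lem:tt:pev}, the same use of the equilibrium identities and the AM--GM inequality, and the same graph-theoretic cancellation over the transitive contact multigraph $\Gamma(\MM)$ justified by the Laplacian-kernel/Kirchhoff Proposition, concluding via LaSalle on the compact absorbing set. The only difference is presentational---you phrase the Guo--Li--Shuai step as a telescoping of edge potentials annihilated by the kernel vector $\zeta=\diag(\bY)\eta$, whereas the paper decomposes the sum over unicyclic subgraphs and applies AM--GM along each cycle---so the plan matches the paper's argument.
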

\bigskip

\begin{proof}
Let
\[
V=\langle X-\bX\circ\log(X),\eta\rangle + \langle Y-\bY\circ\log(Y),\eta\rangle + \langle Z-\bZ\log(Z),\bxi\rangle,\quad \bxi=(\mu_h+\gamma_h)\circ\mu_v^{-1}\circ\xi,
\]
where $(\xi,\eta)^t$ is the positive left eigenvector of $\bNGO$ as discussed in Lemma~\ref{lem:tt:pev}. In particular, we have that

\[
A^t\diag(\bX)\eta=\mu_v\circ\bxi.
\quad\text{and}\quad
B^t\diag(\one-\bZ)\bxi=(\mu_h+\gamma_h)\circ\eta.
\]
Then
\begin{align*}
\dot{V}&= \langle \dot{X}\circ\left(\one-\bX\circ X^{-1}\right),\eta\rangle + \langle \dot{Y}\circ\left(\one-\bY\circ Y^{-1}\right),\eta\rangle + \langle \dot{Z}\circ\left(\one-\bZ\circ Z^{-1}\right),\bxi\rangle\\
&=\langle \mu_h\circ(\one-X)-\diag(X)AZ-\mu_h\circ(\one-X)\circ\bX\circ X^{-1}+\left(\diag(X)AZ\right)\circ\bX\circ X^{-1},\eta\rangle\\
&\qquad +\langle\diag(X)AZ-(\mu_h+\gamma_h)\circ Y-\left(\diag(X)AZ\right)\circ \bY\circ Y^{-1} +(\mu_h+\gamma_h)\circ\bY,\eta\rangle\\
&\qquad +\langle\diag(\one-Z)BY-\mu_v\circ Z-\left(\diag(\one-Z)BY\right)\circ\bZ\circ Z^{-1}+\mu_v\circ\bZ,\bxi\rangle\\
&=\langle\mu_h\circ\left( \one+\bX-X-\bX\circ X^{-1}\right),\eta\rangle  +\langle(AZ)\circ\bX,\eta\rangle-\langle\mu_v\circ  Z,\bxi\rangle -\langle (\mu_h+\gamma_h)\circ Y,\eta\rangle\\
&\qquad +\langle (\mu_h+\gamma_h)\circ\bY,\eta\rangle - \langle\left(\diag(X)AZ\right)\circ\bY\circ Y^{-1},\eta\rangle + \langle\diag(\one-Z)BY,\bxi\rangle\\
&\qquad -\langle\left(\diag(\one-Z)BY\right)\circ\bZ\circ Z^{-1},\bxi\rangle+\langle\mu_v\circ\bZ,\bxi\rangle.
\end{align*}
Now observe that
\[
\langle(AZ)\circ\bX,\eta\rangle=\langle\diag(\bX)AZ,\eta\rangle=\langle Z,A^t\diag(\bX)\eta\rangle=\langle\mu_v\circ Z,\bxi\rangle.
\]
Also, from the equilibrium equations:
\[
(\mu_h+\gamma_h)\circ\bY=\mu_h\circ(\one-\bX)
\quad\text{and}\quad
\diag(\bX)A\bZ=\mu_h\circ(\one-\bX).
\]
Thus,
\[
\langle\mu_v\circ\bZ,\bxi\rangle=\langle\bZ,A^t\diag(\bX)\eta\rangle=\langle \mu_h\left(\one-\bX\right),\eta\rangle.
\]
Combining all this information, we find that
\begin{align*}
\dot{V}&=\langle \mu_h\circ\left(3\one-\bX-X-\bX\circ X^{-1}\right),\eta\rangle-\langle(\mu_h+\gamma_h)\circ Y,\eta\rangle
-\langle\left(\diag(X)AZ\right)\circ\bY\circ Y^{-1},\eta\rangle\\
&\qquad + \langle\diag(\one-Z)BY,\bxi\rangle - \langle\left(\diag(\one-Z)BY\right)\circ\bZ\circ Z^{-1},\bxi\rangle\\
&= \langle\mu_h\circ\left( 3\one-\bX-X-\bX\circ X^{-1}\right),\eta\rangle  + \langle\diag(\one-\bZ)BY,\bxi\rangle
-\langle(\mu_h+\gamma_h)\circ Y,\eta\rangle\\
&\qquad -\langle\left(\diag(X)AZ\right)\circ\bY\circ Y^{-1},\eta\rangle + \langle\diag(\bZ-Z)BY,\bxi\rangle
-\langle\left(\diag(\one-Z)BY\right)\circ\bZ\circ Z^{-1},\bxi\rangle.
\end{align*}
We also have
\begin{align*}
\langle\diag(\one-\bZ)BY,\bxi\rangle&=\langle Y,B^t\diag(\one-\bZ)\bxi\rangle\\
&=\langle(\mu_h+\gamma_h)\circ Y,\eta\rangle.
\end{align*}
and
\begin{align*}
&\langle\diag(\bZ-Z)BY,\bxi\rangle
-\langle\left(\diag(\one-Z)BY\right)\circ\bZ\circ Z^{-1},\bxi\rangle\\
&\qquad = \left\langle\left[2\bZ-Z-\bZ\circ Z^{-1}\right]\circ BY,\bxi\right\rangle.
\end{align*}
Hence, we are left with
\begin{align*}
\dot{V}&=\langle\mu_h\circ\left( 3\one-\bX-X-\bX\circ X^{-1}\right),\eta\rangle 
+ \left\langle\left[2\bZ-Z-\bZ\circ Z^{-1}\right]\circ BY,\bxi\right\rangle \\
&\qquad - \langle\left(\diag(X)AZ\right)\circ\bY\circ Y^{-1},\eta\rangle .
\end{align*}
Now we write
\[
\one=\bX + \one-\bX
\quad\text{and}\quad
\one=\bZ + \one-\bZ.
\]
Then, we also have
\[
-X-\bX^2\circ X^{-1}\leq -2\bX,
\]
and analogously for $Z-\bZ^2\circ Z^{-1}$. 

Therefore, we find
\begin{align*}
\dot{V}&\leq 3\langle \mu_h\circ\left(\one-\bX\right),\eta\rangle -\langle\mu_h\circ \bX\circ(\one-\bX)\circ X^{-1},\eta\rangle\\
&\qquad -\langle \bZ\circ(\one-\bZ)\circ Z^{-1}\circ(BY),\bxi\rangle
- \langle\left(\diag(X)AZ\right)\circ\bY\circ Y^{-1},\eta\rangle.
\end{align*}
Notice that the inequality above for $\dot{V}$ is strict, except when $X=\bX$ and $Z=\bZ$.

Since
\[
\bxi=\diag(\mu_v)^{-1}A^t\diag(\bX)\eta,
\]
we  can then write
\begin{align*}
\dot{V}&\leq 3\langle \mu_h\circ\left(\one-\bX\right),\eta\rangle -\langle \mu_h\circ\bX\circ(\one-\bX)\circ X^{-1},\eta\rangle\\
&\qquad - \langle \mu_v^{-1}\circ \bX\circ A\left(\bZ\circ(\one-\bZ)\circ Z^{-1}\circ(BY)\right),\eta\rangle
- \langle\left(\diag(X)AZ\right)\circ\bY\circ Y^{-1},\eta\rangle.
\end{align*}
Let
\[
\bA=\diag(\bX)A\diag(\bZ)
\quad\text{and}\quad
\bB=\diag(\mu_v)^{-1}\diag(\bZ)^{-1}\diag(\one-\bZ)B\diag(\bY).
\]
Then
$
\bA\one=\mu_h\circ(\one-\bX)
\quad\text{and}\quad
\bB\one=\one.
$
We can then write
\begin{align*}
\dot{V}&\leq 3\langle \bA\one,\eta\rangle -\langle \left(\bA\one\right)\circ\bX\circ X^{-1},\eta\rangle\\
&\qquad - \langle  \bA\left( \bZ\circ Z^{-1}\circ \left(\bB\left(Y\circ\bY^{-1}\right)\right)\right),\eta\rangle
- \langle X\circ\bX^{-1}\circ\left(\bA \left(Z\circ\bZ^{-1}\right)\right)\circ\bY\circ Y^{-1},\eta\rangle\\
&=\sum_{i=1}^n\eta_i\left[3\left(\bA\one\right)_i-\dfrac{\bX_i}{X_i}\left(\bA\one\right)_i-\left(\bA\left( \bZ\circ Z^{-1}\circ \left(\bB\left(Y\circ\bY^{-1}\right)\right)\right)\right)_i-\dfrac{X_i\bY_i}{\bX_iY_i}\left(\bA \left(Z\circ\bZ^{-1}\right)\right)_i\right]\\
&=\sum_{i,j=1}^n\eta_i\bA_{i,j}\left[3-\dfrac{\bX_i}{X_i}-\dfrac{\bZ_j}{Z_j}\left(\bB\left(Y\circ\bY^{-1}\right)\right)_j-\dfrac{X_i\bY_iZ_j}{\bX_iY_i\bZ_j}\right]\\
&=\sum_{i,j,k=1}^n\eta_i\bA_{i,j}\bB_{j,k}\left[3-\dfrac{\bX_i}{X_i}-\dfrac{\bZ_jY_k}{Z_j\bY_k}-\dfrac{X_i\bY_iZ_j}{\bX_iY_i\bZ_j}\right]\\
&=H_n.
\end{align*}

Before proceeding, we recall that a unicyclic graph is a graph with exactly one cycle \cite{Knuth:1997}. Given  the graph $\MM$, we shall denote by $\DD(n,l)$  the set of unicyclic subgraphs of $\MM$, that has order $n$, with cycle of length $l$. Recalling that $\MM$ is a $n$-ECM, we notice that, in a similar way as in Guo et al \cite{gls:2006,gls:2008} we have
\begin{align*}
H_n&=\sum_{i,j,k}^n\eta_i\bA_{i,j}\bB_{j,k}\left[3-\dfrac{\bX_i}{X_i}-\dfrac{Y_k}{\bY_k}\dfrac{\bZ_j}{Z_j}-\dfrac{X_i\bY_i}{\bX_iY_i}\dfrac{Z_j}{\bZ_j}\right]\\
&=\sum_{l=1}^n\left\{\sum_{Q\in\DD(n,l)}  \left(\prod_{(k,h,j)\in \E(CQ)}\bA_{k,j}\bB_{j,h}\right)\right.\\
&\qquad \left.\times \sum_{(r,m,j)\in \E(CQ)}\left[3-\dfrac{\bX_r}{X_r}-\dfrac{Y_m}{\bY_m}\dfrac{\bZ_j}{Z_j}-\dfrac{X_r\bY_r}{\bX_rY_r}\dfrac{Z_j}{\bZ_j}\right]\right\},\\
\end{align*}
where $CQ$ denotes the unique cycle in the unicyclic graph $Q$. Along such a cycle, we have

\begin{align*}
&\sum_{(r,m,j)\in \E(CQ)}\left[3-\dfrac{\bX_r}{X_r}-\dfrac{Y_m}{\bY_m}\dfrac{\bZ_j}{Z_j}-\dfrac{X_r\bY_r}{\bX_rY_r}\dfrac{Z_j}{\bZ_j}\right]\\
=&3|\E(CQ)|-\sum_{(r,m,j)\in \E(CQ)}\left[\dfrac{\bX_r}{X_r}+\dfrac{Y_m}{\bY_m}\dfrac{\bZ_j}{Z_j}+\dfrac{X_r\bY_r}{\bX_rY_r}\dfrac{Z_j}{\bZ_j}\right]\\
\leq&3|\E(CQ)|-3|\E(CQ)|\left[\prod_{(r,m,j)\in \E(CQ)}\dfrac{Y_m\bY_r}{\bY_mY_r}\right]^{1/3|\E(CQ)|}\\
&=0.&
\end{align*}
Hence, we have that $H_n\leq0$, with equality being attained only when
\[
\dfrac{\bX_r}{X_r}=\dfrac{Y_m}{\bY_m}\dfrac{\bZ_j}{Z_j}=\dfrac{X_r\bY_r}{\bX_rY_r}\dfrac{Z_j}{\bZ_j},\quad (r,m,j)\in \E(CQ).
\]
But since, we have $\dot{V}\leq H_n$, with equality only when $X=\bX$ and $Z=\bZ$, we find that $\dot{V}\leq0$, with equality attained only when, for each $Q\in\DD(n,l)$, $l=1,\ldots,n$, we have
\[
1=\dfrac{Y_m}{\bY_m}=\dfrac{\bY_r}{Y_r},\quad (r,m,\cdot)\in \E(CQ).
\]
But  since $CQ$ is a cycle, we have that 
\[
Y_r=\bY_r,\quad r \in \V(CQ).
\]
Since $\M$ is irreducible, we have that $\bA\bB$ is also irreducible by Proposition~\ref{prop:factors_irred}.  Thus,  we have that any two vertices will be in some unicyclic graph, and hence we have equality only when
\[
Y=\bY.
\]
\end{proof}
\section{Discussion}

\label{sec:concl}

We have considered a class of multi-group models for vector-borne  diseases. This class is a natural extension of the classical Bailey-Dietz model and it is a natural candidate for  modeling the impact of fast urban movement in  some vector transmitted diseases, as for instance, in the case of dengue fever---cf. \cite{Adams,Cosner2009,Alvimetal2013}. The host-vector interaction along the  network gives rise to what we call the host-vector contact network---denoted by $\Gamma(\M)$---and that has a number of distinguishing features from the networks that arise in directly transmitted diseases. The most striking one is, perhaps, that the irreducibility  of the circulation topology is not sufficient to guarantee the irreducibility  of the host-vector topology. In addition, we also  characterize the irreducibility of $\Gamma(\M)$ through the irreducibility of the product sub-networks. With this assumption we are able to provide a complete analysis of the dynamics in the sense the this class of models possesses the so-called sharp $\RR_0$ property, i.e., $\RR_0$ is a threshold parameter with the disease free equilibrium being both locally and globally asymptotically stable when $\RR_0\leq 1$, and being unstable when $\RR_0>1$. In addition,  an interior equilibrium (the endemic equilibrium) that is biologically feasible, i.e. has positive coordinates, if and only if  $\RR_0>1$. Furthermore, when it exists it is globally asymptotically stable. 

From a mathematical point of view, these results extend previous result of directly transmitted diseases to the class considered here.  The global stability of the disease free equilibrium (which has been obtained by \cite{ding:etal:2012} for a special case, and more restricted conditions)   is a very natural  extension of the argument presented in \cite{gls:2006}; see also \cite{Shuai:Driessche:2013} for a very general presentation of this argument.  The existence, uniqueness and local stability of the endemic equilibrium  shows that the corresponding  results of \cite{MR87c:92046} for sub-populations hold for this class of models.  Finally, the global stability proof brings a new ingredient in the graph-theoretic framework introduced in \cite{gls:2006,gls:2008}: the identification of $\Gamma(\M)$ with a multi-graph---that we have termed a transitive contact multi-graph---which is a $c$-edge colored multi-digraph, and which contains all the information of the host-vector contact network encoded on  a different way. The product of the host and vector networks can then be interpreted as a contact matrix for such a graph, and  that  allows us to organize  the calculation of the Lie-derivative of the Lyapunov function within a similar graph-theoretical framework  of \cite{gls:2006,gls:2008}. 

The analysis presented here shows  that, in spite of the complexity of the  models in the considered class, the long-term global dynamics is very simple. This, however, does not imply that the transient dynamics of the model is necessarily simple, and further studies are necessary. As an example of this complexity, we refer to \cite{Alvimetal2013} which provides  examples of situations---included in the class  analyzed here---that  have a  local group  $\RR_0$ less than unity, but a global $\RR_0$ that is greater than unity---and hence bounded to evolve to an endemic state in the long term. While this duality of local versus global $\RR_0$ has been observed in other contexts---see \cite{Mckenzie:etal:2012}---we believe that it should be further studied and understood in the realm of epidemic  models.

\def\cprime{$'$}

\end{document}